\newtheorem{definition}{Definition}
\newtheorem{lemma}[definition]{Lemma}
\newtheorem{theorem}[definition]{Theorem}
\newtheorem{corollary}[definition]{Corollary}
\newtheorem{observation}[definition]{Observation}
\global\long\def\trace{\operatorname{Tr}}
\begin{document}

\title{Entanglement detection with scrambled data}

\author{Timo Simnacher}

\affiliation{Naturwissenschaftlich-Technische Fakult{\"a}t, 
Universit{\"a}t Siegen, Walter-Flex-Stra{\ss}e~3, 57068 Siegen, 
Germany}

\author{Nikolai Wyderka}

\affiliation{Naturwissenschaftlich-Technische Fakult{\"a}t, 
Universit{\"a}t Siegen, Walter-Flex-Stra{\ss}e~3, 57068 Siegen, 
Germany}

\author{Ren\'e Schwonnek}

\affiliation{Department of Electrical and Computer Engineering, 
National University of Singapore, 4 Engineering Drive 3, Singapore 117576,
Singapore}

\affiliation{Institut f{\"u}r Theoretische Physik, 
Leibniz Universit{\"a}t Hannover, Appelstra{\ss}e~2, 30167 Hannover,
Germany}

\author{Otfried G{\"u}hne}

\affiliation{Naturwissenschaftlich-Technische Fakult{\"a}t, 
Universit{\"a}t Siegen, Walter-Flex-Stra{\ss}e~3, 57068 Siegen, 
Germany}

\date{\today}

\begin{abstract}
In the usual entanglement detection scenario the possible measurements and 
the corresponding data are assumed to be fully characterized. We consider 
the situation where the measurements are known, but the data is scrambled, 
meaning the assignment of the probabilities to the measurement outcomes is 
unknown. We investigate in detail the two-qubit scenario with local 
measurements in two mutually unbiased bases. First, we discuss the use 
of entropies to detect entanglement from scrambled data, showing that
Tsallis- and R\'enyi entropies can detect entanglement in our scenario, 
while the Shannon entropy cannot. Then, we introduce and discuss 
scrambling-invariant families of entanglement witnesses. Finally, we 
show that the set of non-detectable states in our scenario is non-convex 
and therefore in general hard to characterize.
\end{abstract}

\maketitle

\section{Introduction}

The characterization of entanglement is a central problem in many 
experiments. From a theoretical point of view, methods like quantum 
state tomography or entanglement witnesses are available. In practice, 
however, the situation is not so simple, as experimental procedures
are always imperfect, and the imperfections are difficult to characterize.
To give an example, the usual schemes for quantum tomography require
the performance of measurements in a well-characterized basis such as
the Pauli basis, but in practice the measurements may be misaligned in
an uncontrolled manner. Thus, the question arises how to characterize
states with relaxed assumptions on the measurements or on the obtained
data. 

For the case that the measurements are not completely characterized, 
several methods exist to learn properties of quantum states in a 
calibration-robust or even device-independent manner \cite{seevinck, 
squashing, gittsovich, bancal, bancalexp}. But even if the measurements
are well characterized and trustworthy, there may be problems with
the interpretation of the observed probabilities. For instance, in some
ion trap experiments \cite{rowe2001experimental} the individual ions
cannot be resolved, so that some of the observed frequencies cannot 
be uniquely assigned to the measurement operators in a quantum mechanical 
description. More generally, one can consider the situation where the
connection between the outcomes of a measurement and the observed 
frequencies is lost, in the sense that the frequencies are permuted 
in an uncontrolled way. We call this situation the ``scrambled data'' 
scenario. Still it can be assumed that the measurements have a 
well-characterized quantum mechanical description; so the considered
situation is complementary to the calibration-robust or 
device-independent scenario.

In this paper, we present a detailed study of different methods of 
entanglement detection using scrambled data. After explaining the 
setup and the main definitions, our focus lies on the two-qubit case
and Pauli measurements. We first study the use of entropies for 
entanglement detection. Entropies are natural candidates for this
task, as they are invariant under permutations of the probabilities. 
We demonstrate that Tsallis- and R\'enyi entropies can detect 
entanglement in our scenario, while the Shannon entropy is sometimes
useless. For deriving our criteria, we prove some entropic uncertainty
relations, which may be of independent interest.

Second, we introduce scrambling invariant entanglement witnesses. The
key observation is here that for certain witnesses the permutation of
the data corresponds to the evaluation of another witness, so that the
scrambling of the data does not matter.

Third, we characterize the states for which the scrambled data may origin
from a separable state, meaning that their entanglement cannot be 
detected in the scrambled data scenario. We show that this set of
states is generally not convex, which gives an intuition why entanglement 
detection with scrambled data is a hard problem in general.


\section{Setup and Definitions}


Consider an experiment with two qubits and local dichotomic measurements $A \otimes B$, 
the data then consists of four outcome probabilities $p(A=\pm 1, B=\pm 1)$. We define 
the \emph{scrambled data} as a random permutation of these probabilities within but 
not in-between measurements, such that the assignment of probabilities to outcomes 
is forgotten. The restriction that permutations in-between measurements are 
excluded is natural since they are generically inconsistent because the 
probabilities within a measurement do not sum up to one anymore. 

We denote the Pauli matrices by $\sigma_x$, $\sigma_y$, and $\sigma_z$ and 
the corresponding eigenvectors as $\ket{\pm}$, $\ket{\text{y}^\pm}$, and 
$\ket{0},\ket{1}$, respectively. As an example for scrambled data, we 
consider the singlet state $\ket{\psi^-} = (\ket{+-}-\ket{-+})/\sqrt{2}$ 
and the product state $\ket{0} \otimes \ket{+}$. In order to detect the 
entanglement of $\ket{\psi^-}$, it makes sense to perform the local 
measurements $\sigma_x \otimes \sigma_x$ and $\sigma_z \otimes \sigma_z$, 
as there exists an entanglement witness 
$W = \mathds{1} + \sigma_x \otimes \sigma_x + \sigma_z \otimes \sigma_z$ 
detecting this state \cite{guhne2009entanglement}. These measurements yield 
the outcome probabilities $p_{++}$, $p_{+-}$, $p_{-+}$, and $p_{--}$ and 
$p_{00}$, $p_{01}$, $p_{10}$, and $p_{11}$, respectively.

\begin{table}[t]
\renewcommand{\arraystretch}{1.3}
\begin{tabular}{|l|cccc|cccc|}
\hline
 & \multicolumn{4}{ c| }{$\ket{\psi^-} = (\ket{+-}-\ket{-+})/\sqrt{2}$} & \multicolumn{4}{ c| }{$\ket{+} \otimes \ket{0}$}\\
 & $p_{++}$ & $p_{+-}$ & $p_{-+}$ & $p_{--}$ & $p_{++}$ & $p_{+-}$ & $p_{-+}$ & $p_{--}$ \\ \hline
$\sigma_x \otimes \sigma_x$ & $0$ & $\frac{1}{2}$ & $\frac{1}{2}$ & $0$ & $\frac{1}{2}$ & $\frac{1}{2}$ & $0$ & $0$\\
$\sigma_z \otimes \sigma_z$ & $0$ & $\frac{1}{2}$ & $\frac{1}{2}$ & $0$ & $\frac{1}{2}$ & $0$ & $\frac{1}{2}$ & $0$
\\ \hline
\end{tabular}
\caption[justification=justified]{This table shows the measurement data 
for the singlet state and the product state $\ket{+} \otimes \ket{0}$ 
and local measurements $\sigma_x \otimes \sigma_x$ and $\sigma_z \otimes \sigma_z$. 
The scrambled data is the same for the two states. Thus, detecting the entanglement 
of the singlet state with these measurements is impossible using only the scrambled data.}
\label{table1}
\end{table}

From Table~\ref{table1}, we clearly see that the measurement data is different 
for the two states. However, it is easy to see that there is no way of distinguishing 
the two states using these measurements if one has access only to the \textit{scrambled} 
data since the probability distributions are mere permutations of each other. Thus, 
it is impossible to detect the entanglement of the singlet state because there 
is a separable state realizing the same scrambled data. We call states whose 
scrambled data can be realized by a separable state \textit{possibly separable} 
as the entanglement cannot be detected in this scenario.

The above observation motivates to focus specifically on the local 
measurements $\sigma_x \otimes \sigma_x$ and $\sigma_z \otimes \sigma_z$ 
in the following analysis. However, all results hold more generally for 
local measurements $A_1 \otimes B_1$ and $A_2 \otimes B_2$ if the 
eigenstates of both $A_1, A_2$ and $B_1, B_2$ form mutually unbiased bases. 
This is clear from the Bloch sphere representation because any orthogonal 
basis can be rotated to match the analysis in this work. Indeed, in dimension 
two and three, all pairs of mutually unbiased bases are equivalent under 
local unitaries \cite{brierley2009all}, including the locally two-dimensional 
case considered here.


\section{Entropic uncertainty relations}


\begin{figure}[t]
{\includegraphics[width=0.9\linewidth]{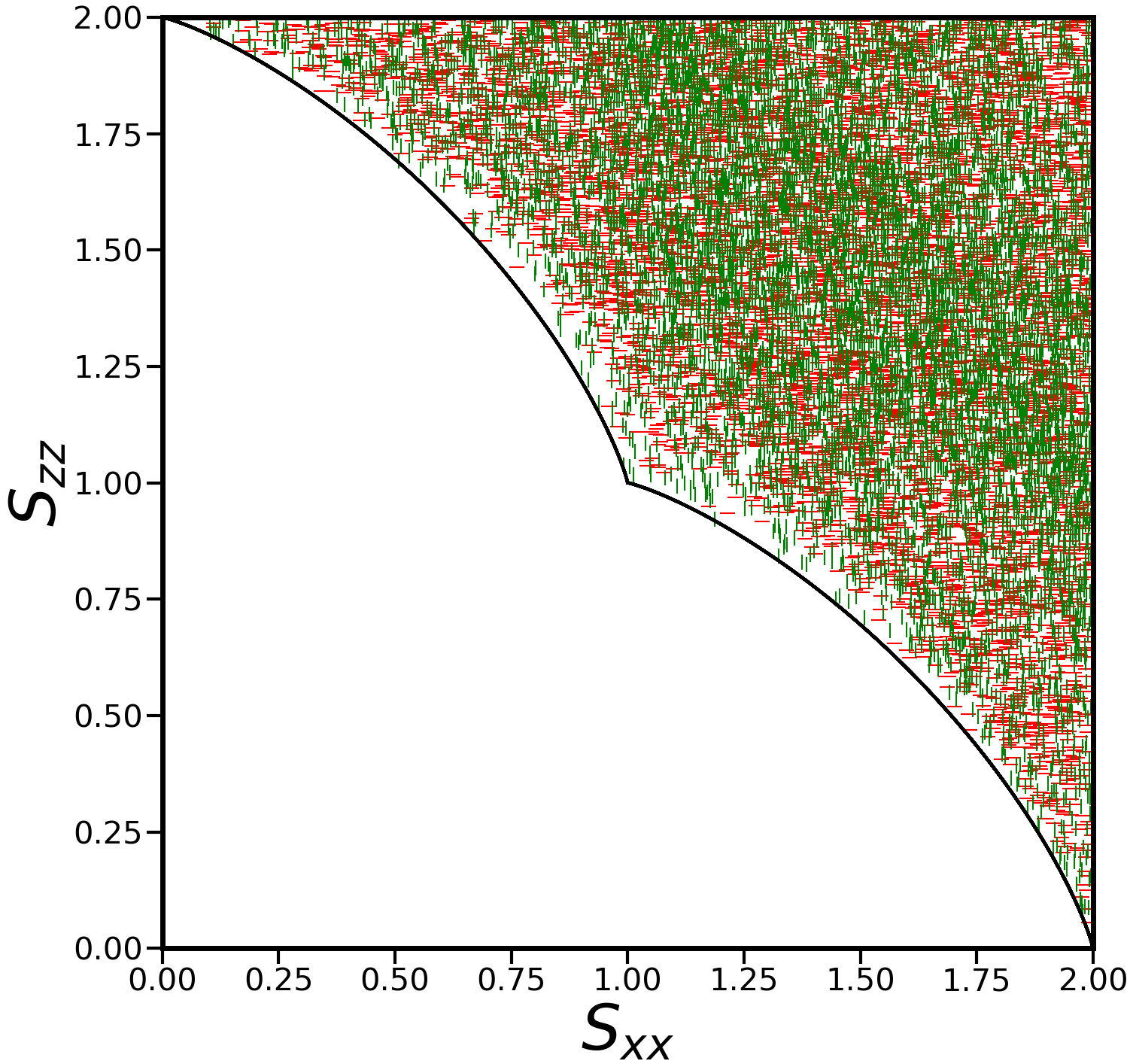}}\\
{\includegraphics[width=0.9\linewidth]{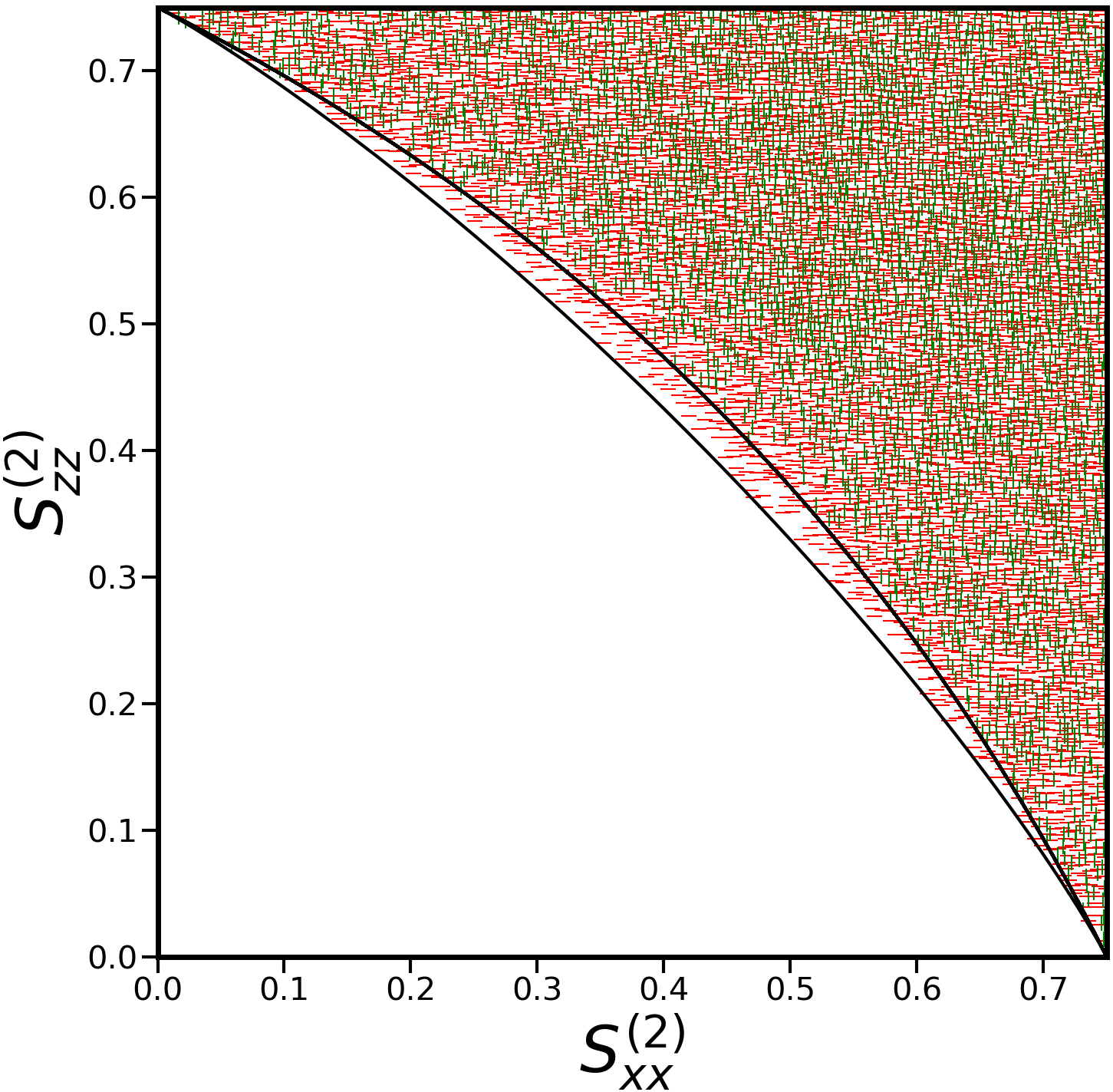}}
\caption[justification=raggedright]{These plots show entropy samples of local 
measurements $\sigma_x \otimes \sigma_x$ and $\sigma_z \otimes \sigma_z$ for 
Shannon entropy (top) and Tsallis-$2$ entropy (bottom) where separable and 
entangled states are represented by green vertical and red horizontal lines, 
respectively. The plot indicates that Shannon entropy is useless for 
entanglement detection, while Tsallis-$2$ entropy is suitable.}
\label{fig:shannontsallis}
\end{figure}

Entropies provide a natural framework to examine scrambled data because 
they are invariant under permutation of probabilities and hence, robust 
against scrambling. In this section, we show that measuring Tsallis-$q$ 
or R\'enyi-$\alpha$ entropies for the two local measurements  
$\sigma_x \otimes \sigma_x$ and $\sigma_z \otimes \sigma_z$ 
in many cases allows for the detection of entanglement and 
show a new family of non-linear, optimal entropic uncertainty relations.

For local measurements $\sigma_i \otimes \sigma_i$, $S_{ii}$ and 
$S_{ii}^{(q)}$ where $i \in \{x,y,z\}$ shall denote the Shannon and 
Tsallis-$q$ entropy of the corresponding four probabilities, 
respectively. For probabilities $\vec{p}$, they are given by 
\cite{shannon1949, havrda1967, tsallis1988possible, wehner2010entropic}
\begin{align}
S(\vec{p}) &= - \sum_j p_j \log p_j, \\
S^{(q)}(\vec{p}) &= \frac{1}{q-1} \Big( 1 - \sum_j p_j^q \Big).
\end{align}

In order to detect entanglement, we investigate the possible pairs 
of $S_{xx}^{(\tilde{q})}$ and $S_{zz}^{(q)}$ that can be realized 
by physical states. For gaining some intuition, we have plotted in 
Fig.~\ref{fig:shannontsallis} random samples of separable and entangled 
two-qubit states, where separability can be checked using the PPT 
criterion \cite{peres1996separability}. As the figures indicate, 
the accessible region for both kinds of states does not differ in 
the case of Shannon entropy and hence, entanglement detection seems 
impossible in this case. This is supported by findings in earlier works:
It has been shown in Ref.~\cite{schwonnek2018additivity} that in the 
case of Shannon entropy and two local measurements, linear entropic 
uncertainty relations of the type $\alpha S_{xx} + \beta S_{zz} \geq c_{\text{sep}} \geq c$ 
with bounds $c_{\text{sep}}$ for separable and $c$ for all states, 
are infeasible to detect entanglement, i.e., $c_{\text{sep}} = c$. 
Furthermore, Conjecture V.6 in Ref.~\cite{abdelkhalek2015optimality} 
states that in the example of local measurements $\sigma_x \otimes \sigma_x$ 
and $\sigma_z \otimes \sigma_z$, even non-linear entropic uncertainty 
relations cannot be used to detect entanglement. However, non-linear 
relations are unknown in most cases \cite{abdelkhalek2015optimality}.

In contrast to the case of Shannon entropy, using Tsallis-$2$ entropy, 
we identify a distinct region occupied by entangled states only, see the
lower part of Fig.~\ref{fig:shannontsallis}). In the following, we will 
show that also $(S_{xx}^{(\tilde{q})},S_{zz}^{(q)})$-plots with 
$q,\tilde{q}\geqslant 2$ exhibit this feature by determining the lower 
bounds of the set of all and the set of separable states.

In order to obtain the boundary of this realizable region, note first 
that a vanishing entropy of $S_{ii}^q = 0$ implies that the system 
is in an eigenstate of the measurement operator $\sigma_i \otimes \sigma_i$. 
Since the measurements define mutually unbiased bases, it is clear 
that in this case the other entropy is maximal. Hence, the states 
$\ket{00}$ and $\ket{++}$ lie on the boundary of the realizable 
region. The mixture of these states with white noise $\mathds{1}/4$ 
leaves the maximal entropy of one measurement unchanged while 
increasing the entropy of the other measurement continuously. 
Therefore, the upper and the right boundary of the region, 
corresponding to maximal $S_{zz}$ and $S_{xx}$, respectively, 
is reached by separable states (see Fig.~\ref{fig:shannontsallis}). 
We will see later that the lower boundaries for all and for separable 
states are both realized by continuous one-parameter families of states. 
Thus, the mixture of these states with white noise forms a continuous 
family of curves connecting the lower boundary with the point where 
both entropies are maximal. Hence, these states realize any accessible 
point in the entropy plot and it is sufficient to only determine 
the lower boundary.

\subsection{Entropic bound for general states}

We begin by determining the bounds in the $(S_{xx}^{(\tilde{q})},S_{zz}^{(q)})$-plot 
for all states.

In Ref.~\cite{abdelkhalek2015optimality}, Theorem V.2 states that for two concave functionals $f_1,f_2$ on the state space, for any state $\rho$, there is a pure state $\ket{\psi}$ such that $f_1(\ket{\psi}\bra{\psi}) \leqslant f_1(\rho)$ and $f_2(\ket{\psi}\bra{\psi}) \leqslant f_2(\rho)$. Furthermore, it is shown in Theorem V.3 that the state $\ket{\psi}$ can additionally be chosen real if the inputs of the functionals are linked by a real unitary matrix.
Thus, in the case of general two-qubit states and local measurements $\sigma_x \otimes \sigma_x$ and $\sigma_z \otimes \sigma_z$, the analysis of the boundary of the entropy plots can be reduced to pure real states. 
First, we will solve the special case where $q=2$, also known as linear entropy. This result can then be used as an anchor to prove the bound for all $q \geqslant 2$.

\begin{lemma}\label{linearEnt}
For two-qubit states $\rho$ and fixed $S_{zz}^{(2)}(\rho)$, minimal $S_{xx}^{(2)}(\rho)$ is reached by the unique state $\rho = \ket{\psi_t}\bra{\psi_t}$ where $\ket{\psi_t} = \frac{1}{\sqrt{3+t^2}} (t \ket{00} + \ket{01} + \ket{10} + \ket{11})$ and some $t \geqslant 1$ determined by the given entropy $S_{zz}^{(2)}(\rho)$.
\end{lemma}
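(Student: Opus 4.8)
The plan is to turn the statement into an explicit finite-dimensional optimization and then classify its critical points. First I would invoke the reduction already recorded in the excerpt: by Theorems~V.2 and V.3 of Ref.~\cite{abdelkhalek2015optimality}, the lower boundary of the $(S_{xx}^{(2)},S_{zz}^{(2)})$-region is attained by pure real states, so it suffices to take $\ket{\psi} = a\ket{00} + b\ket{01} + c\ket{10} + d\ket{11}$ with real coefficients and $a^2+b^2+c^2+d^2 = 1$. Since $\sigma_z\otimes\sigma_z$ is diagonal in the computational basis, its four outcome probabilities are $a^2,b^2,c^2,d^2$, so $S_{zz}^{(2)} = 1 - (a^4+b^4+c^4+d^4)$. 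Hence fixing $S_{zz}^{(2)}$ is the same as fixing the quartic $\sum_i r_i^4$, where I write $\vec r = (a,b,c,d)$.

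For the complementary measurement the outcome amplitudes are the components of $\vec q = \tfrac12 H\vec r$, where $H = H_2\otimes H_2$ is the real Hadamard matrix and $\tfrac12 H$ is an orthogonal involution (this already encodes the $x\leftrightarrow z$ symmetry of the setup), so that $S_{xx}^{(2)} = 1 - \sum_k q_k^4$. Expanding $\sum_k q_k^4$ directly, or more slickly via the orthogonality of the Hadamard rows, collapses it to the clean form
\begin{equation}
\sum_k q_k^4 = \frac{3}{4} - \frac{1}{2}\sum_i r_i^4 + 6\,abcd .
\end{equation}
This is the conceptual heart of the argument: once $\sum_i r_i^4$ is fixed, minimizing $S_{xx}^{(2)}$ is \emph{equivalent to maximizing the product} $abcd$, a fully symmetric quantity.

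It then remains to maximize $abcd$ subject to $\sum_i r_i^2 = 1$ and $\sum_i r_i^4 = \text{const}$. Choosing signs to make the product nonnegative, I would set $x_i = r_i^2 \ge 0$ and maximize $\prod_i x_i$ under $\sum_i x_i = 1$ and $\sum_i x_i^2 = \text{const}$. The Lagrange conditions read $\prod_{l\neq k} x_l = \lambda + 2\mu x_k$ for every $k$, so each $x_k$ is a root of one and the same quadratic $2\mu x^2 + \lambda x - \prod_i x_i = 0$; therefore the $x_i$ take at most two distinct values. This leaves exactly three candidate configurations: all equal (the endpoint $t=1$), two equal pairs, and ``one large, three equal small'', the last of which is precisely $\ket{\psi_t}$.

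The main obstacle is the final comparison of the two nontrivial branches, because near the uniform point they coincide to second order in the spread $s=(\sum_i(x_i-\tfrac14)^2)^{1/2}$. I would settle it by expanding the product along each branch to third order: the two-pair branch has a vanishing cubic term, whereas the one-large branch contributes a strictly positive one, so their exact difference factors as $\tfrac{1}{12}s^3(1/\sqrt3 - s)$. Crucially, the two-pair branch only remains feasible (all $x_i\ge 0$) for $s<1/2 < 1/\sqrt3$, the regime in which this difference is already positive; beyond that only the one-large branch survives. Hence $\ket{\psi_t}$ with $t\ge 1$ is the global minimizer, with $t$ fixed by the prescribed $S_{zz}^{(2)}$. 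Finally, since both entropies are invariant under permutations and sign flips of $(a,b,c,d)$, I would note that the minimizer is unique only up to these symmetries, with $\ket{\psi_t}$ serving as the canonical representative.
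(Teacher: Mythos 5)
Your proposal follows essentially the same route as the paper: reduction to pure real states via Theorems V.2/V.3 of Ref.~\cite{abdelkhalek2015optimality}, the algebraic identity converting the fixed-$S_{zz}^{(2)}$ minimization of $S_{xx}^{(2)}$ into the maximization of the product $abcd$ (your identity $\sum_k q_k^4 = \tfrac34 - \tfrac12\sum_i r_i^4 + 6abcd$ is exactly the paper's $\tfrac{1}{96}[f - 12 + 8k] = x_1x_2x_3x_4$ after rescaling), the substitution $x_i = r_i^2$, and Lagrange multipliers. Where the paper simply asserts ``using Lagrange multipliers, one obtains the optimal solution,'' you actually carry out the classification of critical points and the branch comparison; your exact factorization of the difference as $\tfrac{1}{12}s^3(1/\sqrt3 - s)$ together with the feasibility bound $s\leqslant 1/2$ for the two-pair branch checks out and is a genuine improvement in rigor over the published argument.

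One small omission: the partition type $\{3,1\}$ yields two distinct critical branches for each admissible value of $\sum_i x_i^2$, namely one entry \emph{above} the triple (your $\ket{\psi_t}$ with $t>1$) and one entry \emph{below} it (corresponding to $t<1$), and for $\sum_i x_i^2 \leqslant 1/3$ both are feasible simultaneously. You only compare the first against the two-pair branch. The fix is immediate — the one-small-three-large branch has product $\tfrac{1}{256} - \tfrac{s^2}{32} - \tfrac{s^3}{12\sqrt3} - \tfrac{s^4}{48}$, whose cubic term is strictly negative, so it is dominated by both other branches — but the case should be stated. Your closing remark that uniqueness holds only up to permutations and sign flips of the amplitudes (i.e., up to the local symmetries of the measurement pair) is a fair and arguably more precise reading of the lemma than the paper's own phrasing, which only establishes uniqueness of $t$ within the family $\ket{\psi_t}$.
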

\begin{proof}
According to Theorem V.3 in Ref.~\cite{abdelkhalek2015optimality}, if two entropies $S_1$ and $S_2$ are considered where the measurement bases are related by a real unitary transformation, then for any state $\rho$, there is always a pure and real state $\ket{\psi}$ with $S_1(\ket{\psi}\bra{\psi}) \leqslant S_1(\rho)$ and $S_2(\ket{\psi}\bra{\psi}) \leqslant S_2(\rho)$. 
As in our case $\sigma_x = H \sigma_z H^\dagger$ where $H$ is the Hadamard matrix, it is sufficient to consider pure real states to obtain minimal $S_{zz}^{(2)}$ for given $S_{xx}^{(2)}$. For a general pure real state $\ket{\psi} = (x_1,x_2,x_3,x_4)^T$, the problem boils down to the following maximization problem under constraints
\begin{align}\begin{split}
\max_{x_i} \: &f(x_1,x_2,x_3,x_4),\\
\text{s.t. } &x_1^4+x_2^4+x_3^4+x_4^4 = k = \text{const.},\\
&x_1^2+x_2^2+x_3^2+x_4^2 = 1
\end{split}\end{align}
where $f(\{x_i\}) = (x_1+x_2+x_3+x_4)^4 + (x_1+x_2-x_3-x_4)^4 + (x_1-x_2+x_3-x_4)^4 + (x_1-x_2-x_3+x_4)^4 = 1 - S_{xx}^{(2)}$ and $k = 1 - S_{zz}^{(2)}$. Note that $\frac{1}{4} \leqslant k \leqslant 1$. It is straightforward to see that $\frac{1}{96} [f(\{x_i\}) - 12 \times 1^2 + 8 k] = \frac{1}{96} [f(\{x_i\}) - 12 (x_1^2+x_2^2+x_3^2+x_4^2)^2 + 8 (x_1^4+x_2^4+x_3^4+x_4^4)] = x_1 x_2 x_3 x_4$, using the constraints. So, we can replace $f$ by $x_1 x_2 x_3 x_4$. Clearly, the $x_j$ can be chosen greater than $0$ in case of a maximum. Consequently, $x_i$ can be substituted by $\sqrt{y_i}$. Because $\sqrt{\cdot}$ is a monotone function, the objective function  $y_1 y_2 y_3 y_4$ is equivalent to $\sqrt{y_1 y_2 y_3 y_4}$ since we are only interested in the state realizing the boundary and not necessarily the boundary itself. Thus, the problem reduces to
\begin{align}\begin{split}
\max_{y_i} \: &y_1 y_2 y_3 y_4,\\
\text{s.t. } &y_1^2+y_2^2+y_3^2+y_4^2 = k = \text{const.},\\
&y_1+y_2+y_3+y_4 = 1
\end{split}\end{align}
where all $y_i$ are positive. Using Lagrange multipliers, one obtains the optimal solution: For given $S_{zz}^{(2)}$, the minimal $S_{xx}^{(2)}$ is reached by the state
\begin{align}\label{psit}
\ket{\psi_t} = \frac{1}{\sqrt{3+t^2}} (t \ket{00} + \ket{01} + \ket{10} + \ket{11}),
\end{align}
for some $t \geqslant 1$. Since the minimal $S_{zz}^{(2)}$-entropy state $\ket{\psi_\infty} = \ket{00}$ and maximal $S_{zz}^{(2)}$-entropy state $\ket{\psi_0} = \frac{1}{2} (\ket{00} + \ket{01} + \ket{10} + \ket{11})$ are part of the family $\ket{\psi_t}$ and $\frac{dS_{zz}^{(2)}(\ket{\psi_t})}{dt} < 0$, fixing $S_{zz}^{(2)}$ uniquely determines $t$ and hence, also $\rho_t = \ket{\psi_t}\bra{\psi_t}$.
\end{proof}

This result holds for the Tsallis-2 entropy. However, it can be generalized 
to any pair of Tsallis-$q$ and Tsallis-$\tilde{q}$ entropies with 
$q,\tilde{q} \geqslant 2$. To that end, we use a result from Ref.~\cite{berry2003bounds}. 
There, the authors consider entropy measures $H_f = \sum_i f(p_i)$ and 
$H_g = \sum_i g(p_i)$ where $f(0) = g(0) = 0$ and the functions $f,g$ are 
strictly convex (implying that $g'(p)$ is invertible) with their first 
derivatives being continuous in the 
interval $(0,1)$. They show that then the maximum (minimum) of 
$H_f$ for fixed $H_g$ is obtained by the probability distribution 
$p_1 \geqslant p_2 = \dots = p_n$ if $f^\prime[p(g')]$ as a function 
of $g'$ is strictly convex (concave). Furthermore, for each value 
of $H_g$, there is a unique probability distribution of this form.

In the specific case of Tsallis entropies with parameters $q$ and $\tilde{q}$, it is shown that if
\begin{align}
\frac{q(q-1)}{\tilde{q}(\tilde{q}-1)} p^{q-\tilde{q}}
\end{align}
is monotonically increasing (decreasing), the minimum (maximum) $S^{q}$ for fixed $S^{\tilde{q}}$ is reached by the probability distribution described above, when considering the same measurement for different $q, \tilde{q}$. That is exactly the probability distribution obtained by measuring $\sigma_x \otimes \sigma_x$ or $\sigma_z \otimes \sigma_z$ locally in the state $\ket{\psi_t}$ since
\begin{align}
\ket{\psi_t} &\propto t \ket{00} + \ket{01} + \ket{10} + \ket{11} \nonumber \\ 
&\propto (t+3) \ket{++} + (t-1) (\ket{+-} + \ket{-+} + \ket{--}),
\end{align}
where $t^2 \geqslant 1$ and $(t+3)^2 \geqslant (t-1)^2$. 
This observation assists in proving the following theorem:

\begin{figure}[t]
{\includegraphics[width=0.97\linewidth]{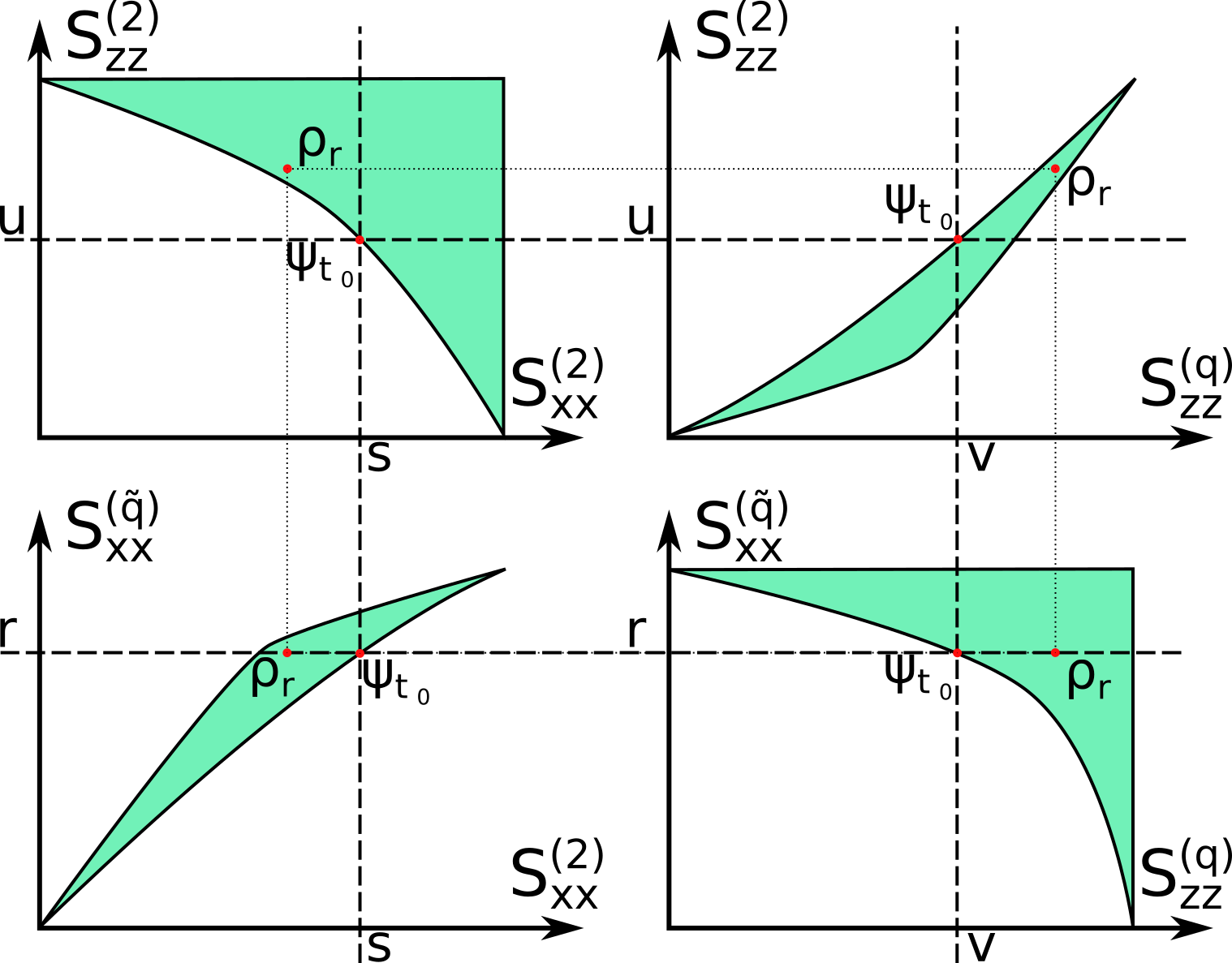}}\\
\caption[justification=raggedright]{These sketched plots 
depict the proof of Theorem~\ref{thm:qtildeq}. Starting with
the lower right picture, for fixed $\rho_r$ with $S_{xx}^{(\tilde{q})}(\rho_r)=r$, 
we consider the state $\ket{\psi_{t_0}}$ defined in Lemma~\ref{linearEnt}, 
with $t_0$ such that also $S_{xx}^{(\tilde{q})}(\psi_{t_0}) = r$. The state 
$\ket{\psi_{t_0}}$ has the largest $S_{xx}^{(2)}$-entropy among all states 
$\rho$ with $S_{xx}^{(\tilde{q})}(\rho) = r$ \cite{berry2003bounds}, 
particularly including $\rho_r$ (see lower left). From Lemma~\ref{linearEnt}, 
it follows that $S_{zz}^{(2)}(\psi_{t_0}) \leqslant S_{zz}^{(2)}(\rho_r)$ 
which is shown in the upper left. This, in turn, implies that 
$S_{zz}^{(q)}(\psi_{t_0}) \leqslant S_{zz}^{(q)}(\rho_r)$ 
\cite{berry2003bounds} (see plot on the upper right).
In summary, we have for any state $\rho_r$ that there exists 
a state $\ket{\psi_{t_0}}$ with 
$S_{xx}^{(\tilde{q})}(\psi_{t_0}) = S_{xx}^{(\tilde{q})}(\rho)$ 
and $S_{zz}^{(q)}(\psi_{t_0}) \leqslant S_{zz}^{(q)}(\rho)$. 
This proves that the boundary is realized by the states 
$\ket{\psi_t}$, which is illustrated again in the lower right.}
\label{fig:berrysanders}
\end{figure}

\begin{theorem}\label{thm:qtildeq}
For all $q,\tilde{q} \geqslant 2$, the lower boundary in the $(S_{xx}^{(\tilde{q})}, S_{zz}^{(q)})$-plot is realized by the family of states $\ket{\psi_t} = \frac{1}{\sqrt{3+t^2}} (t \ket{00} + \ket{01} + \ket{10} + \ket{11})$ where $t \geqslant 1$.
\end{theorem}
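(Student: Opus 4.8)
The plan is to reduce the general $(q,\tilde{q})$ statement to the linear-entropy case of Lemma~\ref{linearEnt}, using the single-measurement comparison result of Ref.~\cite{berry2003bounds} as a bridge between the Tsallis index $2$ and the indices $q,\tilde{q}$. The structural fact that makes this work, already observed above, is that for every $t\geqslant 1$ the state $\ket{\psi_t}$ simultaneously produces a probability distribution of the extremal form $p_1\geqslant p_2=p_3=p_4$ for \emph{both} measurements $\sigma_x\otimes\sigma_x$ and $\sigma_z\otimes\sigma_z$. Hence a single $\ket{\psi_t}$ sits on the relevant single-measurement boundary for each Tsallis index at once, which is exactly what allows the two bridging steps to be chained.

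First I would fix an arbitrary state $\rho_r$, set $r=S_{xx}^{(\tilde{q})}(\rho_r)$, and choose $t_0\geqslant 1$ with $S_{xx}^{(\tilde{q})}(\psi_{t_0})=r$; such a $t_0$ exists and is unique because $S_{xx}^{(\tilde{q})}(\psi_t)$ increases continuously from $0$ (at $t=1$, where the $xx$-distribution is $(1,0,0,0)$) to its maximal value (as $t\to\infty$, where it becomes uniform), covering the full achievable range. The goal is then to prove $S_{zz}^{(q)}(\psi_{t_0})\leqslant S_{zz}^{(q)}(\rho_r)$, since this shows that no state lies below the family on the lower boundary of the $(S_{xx}^{(\tilde{q})},S_{zz}^{(q)})$-plot.

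The argument then runs through three inequalities, mirroring Fig.~\ref{fig:berrysanders}. (i) For the $xx$ data I apply Ref.~\cite{berry2003bounds} with indices $2$ and $\tilde{q}$: since $\frac{2\cdot 1}{\tilde{q}(\tilde{q}-1)}p^{2-\tilde{q}}$ is monotonically decreasing for $\tilde{q}\geqslant 2$, the extremal distribution \emph{maximises} $S_{xx}^{(2)}$ at fixed $S_{xx}^{(\tilde{q})}$, giving $S_{xx}^{(2)}(\psi_{t_0})\geqslant S_{xx}^{(2)}(\rho_r)$. (ii) Feeding this into Lemma~\ref{linearEnt}, whose boundary is a monotonically decreasing curve realised by the $\ket{\psi_t}$ family, the larger $S_{xx}^{(2)}(\psi_{t_0})$ forces $S_{zz}^{(2)}(\psi_{t_0})\leqslant S_{zz}^{(2)}(\rho_r)$. (iii) For the $zz$ data I apply Ref.~\cite{berry2003bounds} once more with indices $q$ and $2$: since $\frac{q(q-1)}{2\cdot 1}p^{q-2}$ is monotonically increasing for $q\geqslant 2$, the extremal distribution \emph{minimises} $S_{zz}^{(q)}$ at fixed $S_{zz}^{(2)}$, and since this lower-boundary curve increases with $S_{zz}^{(2)}$, the inequality from (ii) propagates to $S_{zz}^{(q)}(\psi_{t_0})\leqslant S_{zz}^{(q)}(\rho_r)$. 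Combining (i)--(iii) yields the theorem.

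I expect the main obstacle to be the bookkeeping of orientations in (i) and (iii). Each invocation of Ref.~\cite{berry2003bounds} is a one-point statement comparing states at a \emph{single} fixed value of one entropy, whereas $\ket{\psi_{t_0}}$ and $\rho_r$ generically carry different values of $S_{xx}^{(2)}$ (and of $S_{zz}^{(2)}$). To turn a pointwise optimality into a comparison across different entropy values, I must additionally use the monotonicity of the corresponding boundary curve---decreasing in (ii), increasing in (iii)---and check that the convexity/concavity hypotheses of Ref.~\cite{berry2003bounds} indeed hold with the claimed orientations (upper boundary in (i), lower boundary in (iii)). Once these directions are pinned down, the remainder is continuity together with the already-established linear case.
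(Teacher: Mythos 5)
Your proposal follows essentially the same route as the paper's proof: fix $\rho_r$, pick the unique $\ket{\psi_{t_0}}$ matching its $S_{xx}^{(\tilde q)}$ value, and chain the two Berry--Sanders single-measurement comparisons through the $q=\tilde q=2$ anchor of Lemma~\ref{linearEnt}, using the monotonicity of the boundary curves to convert pointwise optimality into inequalities. The orientation bookkeeping you flag as the main obstacle is resolved exactly as you anticipate (and your final inequality $S_{zz}^{(q)}(\psi_{t_0})\leqslant S_{zz}^{(q)}(\rho_r)$ even corrects a sign typo in the paper's summary chain), so the argument is sound and matches the published one.
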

\begin{proof}
For fixed $r$ and a state $\rho_r$ with $S_{xx}^{(\tilde{q})}(\rho_r) = r$, there exists a unique state $\ket{\psi_{t_0}}$ with  $S_{xx}^{(\tilde{q})}(\psi_{t_0}) = r$. From Theorem $1$ in Ref.~\cite{berry2003bounds}, it follows that
\begin{align}
S_{xx}^{(2)}(\rho_r) \leqslant S_{xx}^{(2)}(\psi_{t_0}) \equiv s
\end{align}
(see bottom left graph in Fig.~\ref{fig:berrysanders}).
Since $\frac{dS_{zz}^{(2)}(\ket{\psi_t})}{dS_{xx}^{(2)}(\ket{\psi_t})} = \frac{dS_{zz}^{(2)}(\ket{\psi_t})}{dt} \left(\frac{dS_{xx}^{(2)}(\ket{\psi_t})}{dt}\right)^{-1} < 0$, it follows from Lemma~\ref{linearEnt} that $S_{xx}^{(2)}(\rho_r) \leqslant s$ implies 
\begin{align}
S_{zz}^{(2)}(\rho_r) \geqslant S_{zz}^{(2)}(\psi_{t_0}) \equiv u
\end{align}
(see top left graph in Fig.~\ref{fig:berrysanders}). Now, given $S_{zz}^{(2)}(\rho_r) \geqslant u$, using the fact that $\frac{dS_{zz}^{(q)}(\ket{\psi_t})}{dS_{zz}^{(2)}(\ket{\psi_t})} > 0$, it follows from Ref.~\cite{berry2003bounds} that
\begin{align}
S_{zz}^{(q)}(\rho_r) \geqslant S_{zz}^{(q)}(\psi_{t_0}) \equiv v
\end{align}
(see top right graph in Fig.~\ref{fig:berrysanders}).

\begin{figure}[t]
{\includegraphics[width=0.97\linewidth]{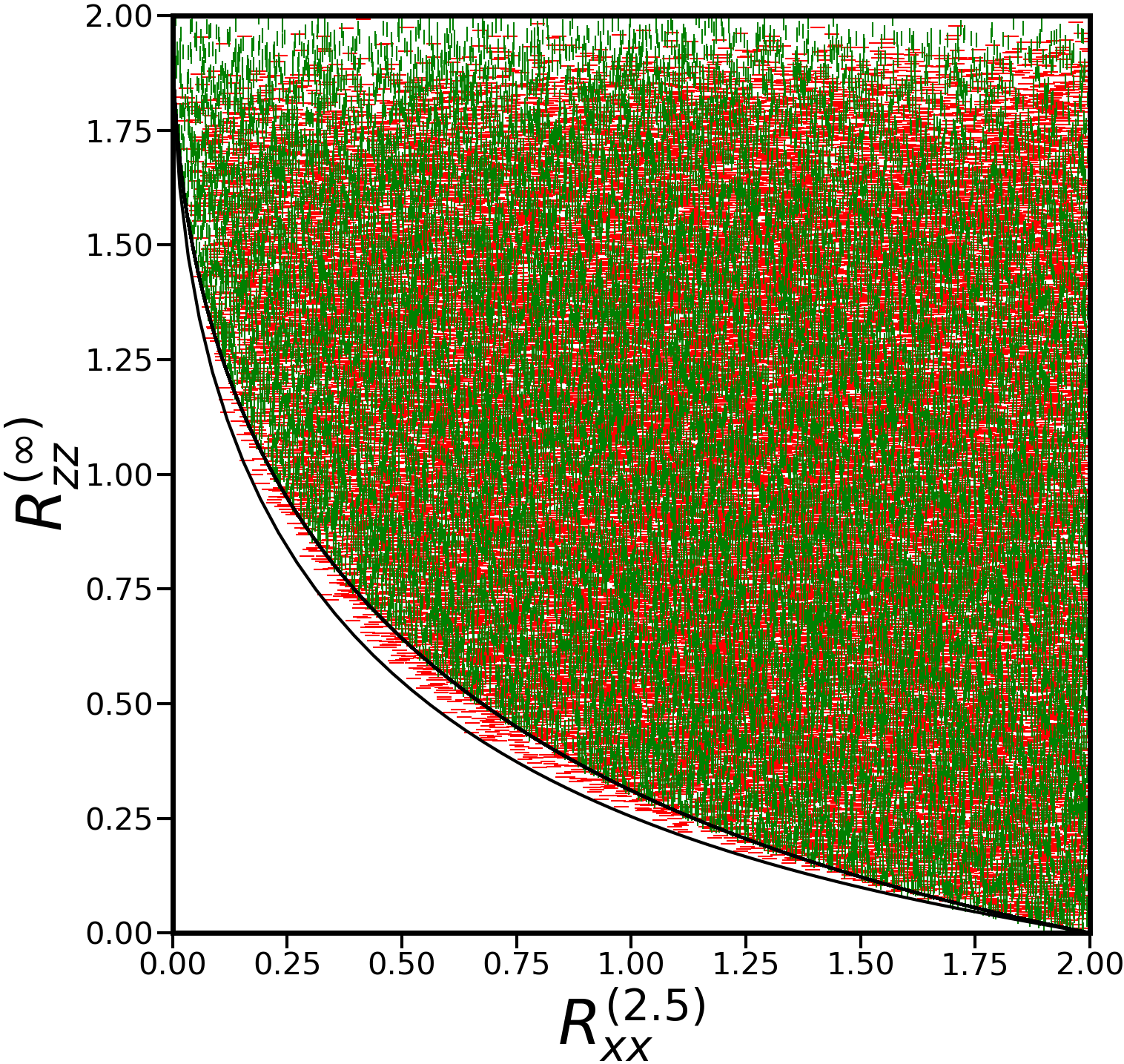}}\\
\caption[justification=raggedright]{This plots shows entropy samples of local measurements $\sigma_x \otimes \sigma_x$ and $\sigma_z \otimes \sigma_z$ for R\'enyi-$2.5$- and R\'enyi-$\infty$-entropies, respectively. Separable states are represented by green vertical lines, while red horizontal lines indicate entangled states. The lower boundary is given by the states $\ket{\psi_t}$ defined in Lemma~\ref{linearEnt}.}
\label{fig:renyi25infbw}
\end{figure}

In summary, by considering all values of $r$, we found that for all two-qubit states $\rho_r$,
\begin{align}
S_{xx}^{(\tilde{q})}(\rho_r) = r &\Rightarrow S_{xx}^{(2)}(\rho_r) \leqslant S_{xx}^{(2)}(\psi_{t_0})\\
&\Rightarrow S_{zz}^{(2)}(\rho_r) \geqslant S_{zz}^{(2)}(\psi_{t_0})\\
&\Rightarrow S_{zz}^{(q)}(\rho_r) \leqslant S_{zz}^{(q)}(\psi_{t_0})
\end{align}
(see also the lower right graph in Fig.~\ref{fig:berrysanders}), where $\ket{\psi_{t_0}}$ is uniquely determined by $S_{xx}^{(\tilde{q})}(\psi_{t_0}) = r$. All bounds, as well as the overall implication $S_{xx}^{(\tilde{q})}(\rho_r) = r \Rightarrow S_{zz}^{(q)}(\rho_r) \leqslant S_{zz}^{(q)}(\psi_{t_0})$ are tight since they are saturated by the same state $\ket{\psi_{t_0}}$. Thus, the lower boundary in the $(S_{xx}^{(\tilde{q})}, S_{zz}^{(q)})$-plot is realized by the family of states $\ket{\psi_t} = \frac{1}{\sqrt{3+t^2}} (t \ket{00} + \ket{01} + \ket{10} + \ket{11})$ where $t \geqslant 1$.
\end{proof}

In the above proof, we used the $q=\tilde{q}=2$-case as an anchor to derive the result for all $q,\tilde{q} \geqslant 2$. The same argument also holds if we would use any other anchor case where the $\ket{\psi_t}$ are the optimal states. Numerical evidence suggests that the conclusion is indeed valid for any $q,\tilde{q} \gtrsim 1.37$.
Furthermore, the result can also be interpreted as a family of entropic uncertainty relations.

\begin{corollary}\label{cor:eur}
For all two-qubit states $\rho$ and $q,\tilde{q}\geqslant2$, 
\begin{align}
&F[S_{xx}^{(\tilde{q})}(\rho),S_{zz}^{(q)}(\rho)] \equiv S_{zz}^{(q)}(\rho) - S_{zz}^{(q)}(\ket{\psi_t}[S_{xx}^{(\tilde{q})}(\rho)]) 
\nonumber \\
&= S_{zz}^{(q)}(\rho) - \frac{1}{q-1}\left(1-\frac{3+t^{2q}[S_{xx}^{(\tilde{q})}(\rho)]}{\{3+t^2[S_{xx}^{(\tilde{q})}(\rho)]\}^q}\right) \geqslant 0. 
\label{eq:eur}
\end{align} 
Here, $\ket{\psi_t}[S_{xx}^{(\tilde{q})}(\rho)]$ and 
$t[S_{xx}^{(\tilde{q})}(\rho)]$ are the unique state 
$\ket{\psi_t}$ and parameter $t$ in dependence on
$S_{xx}^{(\tilde{q})}(\rho)$, such that 
$S_{xx}^{(\tilde{q})}(\ket{\psi_t}) = S_{xx}^{(\tilde{q})}(\rho)$.
\end{corollary}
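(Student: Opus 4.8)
The plan is to read this corollary as a direct reformulation of Theorem~\ref{thm:qtildeq}, supplemented by an explicit evaluation of the boundary entropy and a check that the reparametrization is well-defined. Theorem~\ref{thm:qtildeq} asserts that the family $\ket{\psi_t}$ realizes the lower boundary of the $(S_{xx}^{(\tilde{q})},S_{zz}^{(q)})$-plot; equivalently, among all states $\rho$ with a fixed value of $S_{xx}^{(\tilde{q})}(\rho)$, the smallest attainable $S_{zz}^{(q)}$ is that of the unique boundary state $\ket{\psi_{t}}$ sharing this value of $S_{xx}^{(\tilde{q})}$. Hence $S_{zz}^{(q)}(\rho)\geqslant S_{zz}^{(q)}(\ket{\psi_t}[S_{xx}^{(\tilde{q})}(\rho)])$ for every two-qubit state, which is precisely the assertion $F\geqslant0$.

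To obtain the closed form, I would compute the $\sigma_z\otimes\sigma_z$ statistics of $\ket{\psi_t}$ directly. Measuring in the computational basis yields the outcome probabilities $(t^2,1,1,1)/(3+t^2)$, and inserting these into the Tsallis definition $S^{(q)}(\vec{p})=\frac{1}{q-1}(1-\sum_j p_j^q)$ gives $S_{zz}^{(q)}(\ket{\psi_t})=\frac{1}{q-1}\big(1-(3+t^{2q})/(3+t^2)^q\big)$, which reproduces the displayed expression once $t=t[S_{xx}^{(\tilde{q})}(\rho)]$ is substituted.

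The step requiring genuine care is the well-definedness of $t[S_{xx}^{(\tilde{q})}(\rho)]$: the statement only makes sense if for each attainable value of $S_{xx}^{(\tilde{q})}$ there is a unique $t$ with $S_{xx}^{(\tilde{q})}(\ket{\psi_t})$ equal to it. The plan here is to compute the $\sigma_x\otimes\sigma_x$ statistics, $p_{++}=(t+3)^2/[4(t^2+3)]$ and $p_{+-}=p_{-+}=p_{--}=(t-1)^2/[4(t^2+3)]$, and to show that as $t$ increases over $[1,\infty]$ these distributions form a majorization chain running from the point distribution $(1,0,0,0)$ at $t=1$ to the uniform distribution $(1,1,1,1)/4$ at $t=\infty$ (where $\ket{\psi_\infty}=\ket{00}$). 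Concretely, one checks that each partial sum of the descending-ordered probabilities is strictly monotone in $t$; since $S_{xx}^{(\tilde{q})}$ is strictly Schur-concave for $\tilde{q}\geqslant2$, it then increases strictly and continuously along the chain, so $t\mapsto S_{xx}^{(\tilde{q})}(\ket{\psi_t})$ is a bijection onto the attainable range and its inverse $t[\cdot]$ is well-defined.

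I expect this monotonicity check to be the only real obstacle, since the inequality itself is inherited from Theorem~\ref{thm:qtildeq} and the explicit formula is a one-line substitution. The subtlety is to exclude any non-monotone behaviour of $S_{xx}^{(\tilde{q})}(\ket{\psi_t})$ uniformly in $\tilde{q}\geqslant2$; the majorization argument handles all $\tilde{q}$ at once, but if a direct derivative computation is preferred, it reduces to showing that the relevant partial sums have derivative proportional to $-(t+3)(t-1)$, negative throughout $t>1$, mirroring the relation $\mathrm{d}S_{zz}^{(2)}(\ket{\psi_t})/\mathrm{d}t<0$ already used in Lemma~\ref{linearEnt}.
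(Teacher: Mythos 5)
Your proposal is correct and matches the paper's (implicit) route: the corollary is stated as an immediate reformulation of Theorem~\ref{thm:qtildeq}, with the closed form obtained by inserting the $\sigma_z\otimes\sigma_z$ outcome probabilities $(t^2,1,1,1)/(3+t^2)$ of $\ket{\psi_t}$ into the Tsallis entropy, exactly as you do. Your majorization check of the well-definedness of $t[\cdot]$ is a sound (and slightly more self-contained) justification of the uniqueness that the paper already takes from Lemma~\ref{linearEnt} and the Berry--Sanders result invoked in the proof of Theorem~\ref{thm:qtildeq}.
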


For the example of $q = \tilde{q} = 2$, we have
\begin{align}
F[S_{xx}^{(2)}(\rho),S_{zz}^{(2)}(\rho)]=S_{zz}^{(2)}-\frac{3QT^{2}-T^{4}}{3Q^{2}}\geqslant0,
\end{align}
where
\begin{align}
T &= \sqrt{9-12S_{xx}^{(2)}},\\
Q &= 3+T+\sqrt{3}\sqrt{(1+T)(3-T)}.
\end{align}
This bound is displayed in Fig.~\ref{fig:shannontsallis}.

Note that this result is also valid for R\'enyi-$\alpha$ entropies 
\cite{renyi1961proceedings} with $\alpha, \tilde{\alpha} \geqslant 2$ as R\'enyi-$\alpha$ and Tsallis-$q$ entropies are monotone functions of each other for $\alpha = q$. Thus, the change from the Tsallis- to the R\'enyi entropy induces simply a rescaling of the axis in the $(S_{xx},S_{zz})$-plot. An example is given in Fig.~\ref{fig:renyi25infbw} where $\alpha = 2.5$ and $\tilde{\alpha} = \infty$.

In contrast to any linear bounds, which are usually considered \cite{schwonnek2018additivity}, the uncertainty relations found here are optimal. That means, for any entropic uncertainty relation defined in Corollary~\ref{cor:eur} and any $S_{zz}^{(q)}$, there exists a state, namely the $\ket{\psi_t}$ with the given entropy, saturating the corresponding bound.

\subsection{Entropic bound for separable states}

In this section, we determine the bound for separable states. 
Theorem V.2 from Ref.~\cite{abdelkhalek2015optimality}, which shows that for any state $\rho$ there is a pure state $\ket{\psi}$ such that $f_1(\psi) \leqslant f_1(\rho)$ and $f_2(\psi) \leqslant f_2(\rho)$, cannot be applied to separable states. 
This is because the boundary of the space of separable states is determined by positivity as well as separability conditions. 
While the former implies that states on the boundary are of lower rank, the latter gives a different constraint. 
However, this can still be used to simplify the optimization process, as we prove in the following:

\begin{theorem}\label{thm:proofsep}
Let $f_1$, $f_2$ be two continuous concave functions on the state space. Then, for every separable state $\rho$, there exists a separable state $\rho^*$ of the form
\begin{align}
\rho^* = (1-p) \ket{ab}\bra{ab} + p \ket{cd}\bra{cd},
\end{align}
where $0 \leqslant p \leqslant 1$ and $\ket{ab}\bra{ab}$, $\ket{cd}\bra{cd}$ are pure product states, such that $f_1(\rho^*) \leqslant f_1(\rho)$ and $f_2(\rho^*) \leqslant f_2(\rho)$.
\end{theorem}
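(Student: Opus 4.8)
The plan is to reduce a general separable $\rho$ to a two-term product mixture by pruning one pure product state at a time, never increasing $f_1$ or $f_2$. I would start from a decomposition $\rho = \sum_{i=1}^n p_i \ket{a_i b_i}\bra{a_i b_i}$ with all $p_i>0$, which exists by separability. If $n\leqslant 2$ there is nothing to do. The target number \emph{two} (rather than the single pure state of Theorem V.2) reflects that the extreme points of the separable set are the rigid pure product states, with no product states lying ``in between'' two of them; hence one generically cannot push both $f_1$ and $f_2$ below their values at $\rho$ at a single product state, but can at a two-term mixture.

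For $n\geqslant 3$, I would restrict attention to the simplex $\Delta = \mathrm{conv}\{\ket{a_i b_i}\bra{a_i b_i}\}$, on which $f_1,f_2$ are concave, and move along segments inside $\Delta$. Write displacements as $D_v = \sum_i v_i \ket{a_i b_i}\bra{a_i b_i}$ with $\sum_i v_i = 0$; the admissible directions form the tangent space $V=\{v:\sum_i v_i = 0\}$ of dimension $n-1\geqslant 2$. Since all $p_i>0$, the state $\rho$ lies in the relative interior of $\Delta$, so each concave $f_j$ admits a supergradient there, i.e.\ a linear functional $\ell_j$ on $V$ with $f_j(\rho + t D_v)\leqslant f_j(\rho) + t\,\ell_j(v)$. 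Because $\dim V\geqslant 2$ and there are only two linear inequalities, there exists a nonzero $v\in V$ with $\ell_1(v)\leqslant 0$ and $\ell_2(v)\leqslant 0$ (the intersection of two half-spaces through the origin in dimension at least two always contains a ray). As $\sum_i v_i = 0$ and $v\neq 0$, some $v_i<0$, so increasing $t$ drives the coefficient $p_i + t v_i$ to zero at a finite $t^*>0$; at that point $\rho(t^*)=\sum_i (p_i + t^* v_i)\ket{a_i b_i}\bra{a_i b_i}$ is a mixture of at most $n-1$ pure product states, and the supergradient bound gives $f_j(\rho(t^*))\leqslant f_j(\rho)$ for $j=1,2$.

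Iterating strictly decreases $n$, so after finitely many steps the decomposition has at most two terms, yielding $\rho^*=(1-p)\ket{ab}\bra{ab}+p\ket{cd}\bra{cd}$ with $f_j(\rho^*)\leqslant f_j(\rho)$, as claimed. The degenerate case in which the current product states are affinely dependent with $\dim V\leqslant 1$ is even easier: then $\rho$ already lies on a segment between two pure product states and is of the desired form. The main obstacle, and the real content of the step, is the \emph{simultaneous} control of both functions: a single concave function is trivially minimized at a vertex, but steering both downhill at once requires the dimension count together with a genuine descent certificate. The supergradient furnishes this certificate and, crucially, sidesteps the fact that the entropic $f_j$ are non-differentiable at low-rank states --- superdifferentials remain nonempty in the relative interior of $\Delta$ --- which is precisely the regularity gap that prevents Theorem V.2 from being applied on the boundary of the separable set.
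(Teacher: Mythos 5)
Your argument is correct, and it takes a genuinely different route from the paper. The paper works geometrically on the boundary of the separable set: it writes $\rho$ as a mixture of a boundary state $\sigma$ and its antipode, moves $\sigma$ continuously along the (assumed connected) boundary, uses the intermediate value theorem plus concavity to land on a boundary state $\gamma^*$ with $f_{1,2}(\gamma^*)\leqslant f_{1,2}(\rho)$, then exploits the fact that $\gamma^*$ satisfies an extra linear constraint (a kernel vector or tangency to a witness) which is inherited by every pure product state in any decomposition of $\gamma^*$; iterating accumulates constraints until the remaining state space is spanned by at most two product vectors. Your proof instead fixes a finite product decomposition $\rho=\sum_i p_i\ketbra{a_ib_i}{a_ib_i}$ and works entirely in the coefficient simplex: supergradients of the two concave functions at the interior point, together with the dimension count $\dim V=n-1\geqslant 2$, yield a common non-increasing direction, and following it until a coefficient vanishes removes one term per step. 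This is more elementary and arguably more robust --- it avoids the connectedness assumption on the boundary that the paper's proof explicitly relies on, avoids the appeal to the structure of product vectors in two-dimensional subspaces of the two-qubit space, and generalizes verbatim to $k$ concave functionals (terminating at a $k$-term product mixture). What the paper's route buys in exchange is a $\rho^*$ with a concrete geometric characterization (a state saturating positivity or witness constraints), and a closer parallel to the proof of Theorem V.2 of the cited reference. One small remark: your aside about the ``degenerate case $\dim V\leqslant 1$'' is unnecessary, since $V=\{v:\sum_i v_i=0\}$ has dimension $n-1$ by construction and possible affine dependence of the projectors $\ketbra{a_ib_i}{a_ib_i}$ never enters the argument, which lives entirely in coefficient space.
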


\begin{proof}
In the range of $\rho$, we consider some state $\sigma$ on the boundary of the space of separable states in this subspace. 
Then, there is some antipode $\sigma^\blacktriangledown$ defined as $\frac{1}{\lambda} \left[ \rho - (1-\lambda)\sigma \right]$ for the smallest $\lambda$ such that this expression still describes a separable state. 
By this definition, obviously, also $\sigma^\blacktriangledown$ lies on the boundary. 
Now, $\sigma$ can be converted continuously into $\sigma^\blacktriangledown$ by a curve $t \mapsto \gamma(t)$ on the  boundary where $\gamma(0) = \sigma$ and $\gamma(1) = \sigma^\blacktriangledown$, as long as the boundary is connected (see Fig.\ref{fig:proofsep}). 
Since the functions are continuous, there must be some $t^* \in [0,1]$ such that $f_1[\gamma(t^*)] = f_1(\rho)$. 
At this point, either $f_2[\gamma(t^*)] \leqslant f_2(\rho)$ or it holds that $f_1[\gamma^\blacktriangledown (t^*)] \leqslant f_1(\rho)$ and $f_2[\gamma^\blacktriangledown (t^*)] \leqslant f_2(\rho)$ since otherwise concavity implies the contradiction
\begin{align}
f_i(\rho) &\geqslant [1-\lambda(t)] f_i[\gamma(t)] + \lambda(t) f_i[\gamma^\blacktriangledown (t)]\\
 &> [1-\lambda(t)] f_i(\rho) + \lambda(t) f_i(\rho) = f_i(\rho)
\end{align}
for $i=1$ or $i=2$. 
Thus, we find a state $\gamma^*$ with $f_{1,2}[\gamma^*] \leqslant f_{1,2}(\rho)$.
Compared to $\rho$, this boundary state $\gamma^*$ satisfies at least one additional constraint of the form
\begin{align}\label{eq:sep_bound_constr}
\gamma^* \ket{\phi_0} = 0, &&
\trace(\gamma^* W) = 0,
\end{align}
where $\ket{\phi_0}$ is an eigenstate of $\gamma^*$ and $W$ is an entanglement witness, because $\gamma^*$ lies at the positivity or separability boundary, respectively.

When we decompose $\gamma^*$ into pure product states $\gamma^* = \sum_j p_j \ket{a_j b_j}\bra{a_j b_j}$, every $\ket{a_j b_j}\bra{a_j b_j}$ satisfies the constraints individually.
This is because the range of each of them has to be contained in the range of $\gamma^*$. 
Furthermore, for product states it holds that $\trace (\ket{a_j b_j}\bra{a_j b_j} W )\geqslant 0$ and since we have $0 = \trace(\gamma^* W) = \sum_j p_j \trace (\ket{a_j b_j}\bra{a_j b_j} W)$, also $\trace ( \ket{a_j b_j}\bra{a_j b_j} W) = 0$.

Thus, we can apply this procedure repeatedly, considering only the state space defined by the already accumulated constraints of the form given in Eq.~(\ref{eq:sep_bound_constr}). 
In the end, we either have a pure product state $\rho^*$ or a one-dimensional state space spanned by two pure product states $\ket{ab}\bra{ab}$ and $\ket{cd}\bra{cd}$, whose boundary is disconnected and hence, the scheme cannot be applied anymore. 
This might indeed happen, as there are two-dimensional subspaces of the two-qubit 
space with exactly two product vectors in it \cite{sanpera1998local}. 
Either way, for any separable state $\rho$ we find a state $\rho^*$ of the form $\rho^* = (1-p) \ket{ab}\bra{ab} + p \ket{cd}\bra{cd}$ such that $f_1(\rho^*) \leqslant f_1(\rho)$ and $f_2(\rho^*) \leqslant f_2(\rho)$.
\end{proof}

\begin{figure}[t]
{\includegraphics[width=0.97\linewidth]{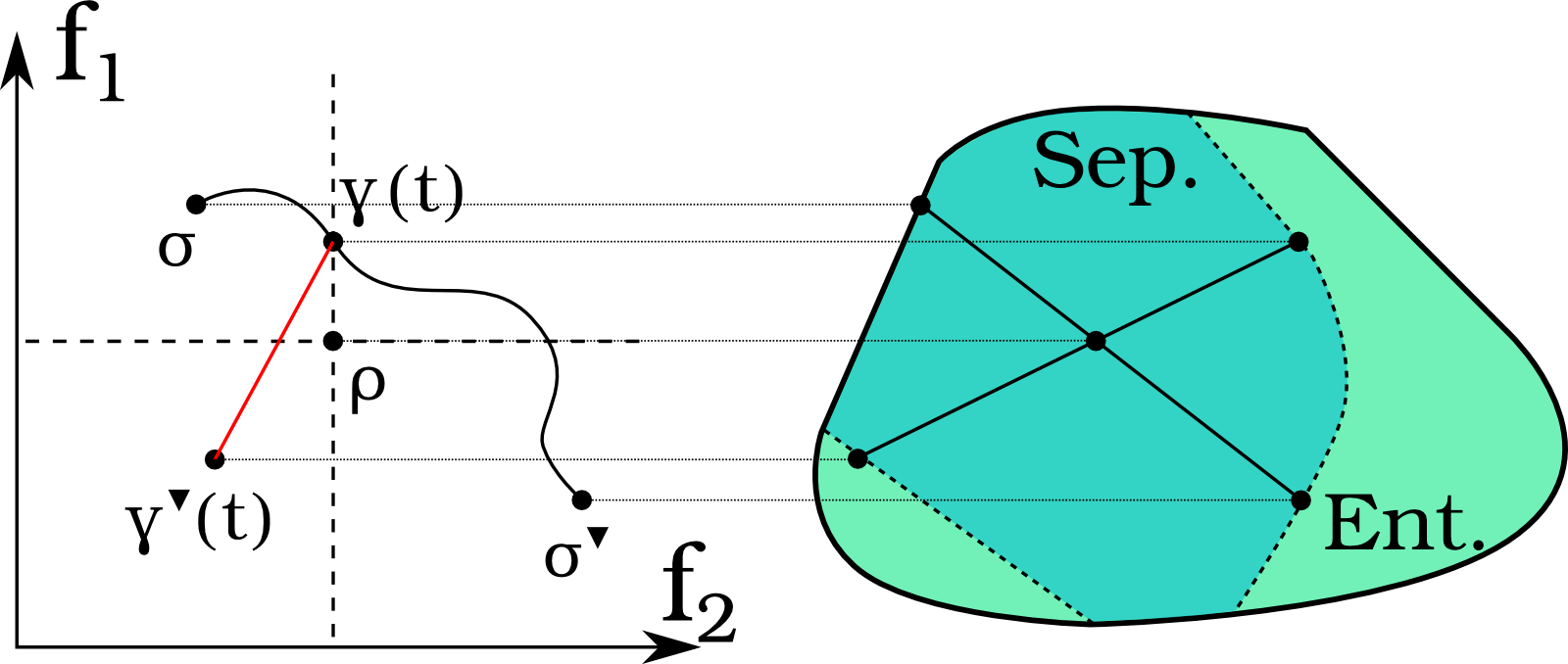}}\\
\caption[justification=raggedright]{This sketch shows the proof idea of Theorem~\ref{thm:proofsep}, where the left plot is based on Fig.~6 in Ref.~\cite{abdelkhalek2015optimality}. Any separable state $\rho$ can be written as the mixture of two states on the topological boundary of the space of separable states. These two states can be converted into each other continuously. In this process, we find a state $\gamma^\blacktriangledown 
(t)$ on the boundary such that $f_1[\gamma^\blacktriangledown (t)] \leqslant f_1(\rho)$ and $f_2[\gamma^\blacktriangledown (t)] \leqslant f_2(\rho)$ for continuous concave functionals $f_1$ and $f_2$.}
\label{fig:proofsep}
\end{figure}

In the case of local $\sigma_x \otimes \sigma_x$ and $\sigma_z \otimes \sigma_z$ measurements, we can restrict the optimization further to real states $\ket{ab}$ and $\ket{cd}$.

\begin{observation}
For any separable state $\rho$, there is a state $\rho^* = (1-p) \ket{ab}\bra{ab} + p \ket{cd}\bra{cd}$ where $0 \leqslant p \leqslant 1$ and $\ket{ab}$ and $\ket{cd}$ are pure and real product states such that $S_{xx}^{(\tilde{q})}(\rho^*) \leqslant S_{xx}^{(\tilde{q})}(\rho)$ and $S_{zz}^{(q)}(\rho^*) \leqslant S_{zz}^{(q)}(\rho)$ for any $q, \tilde{q} \in \mathds{R}$.
\end{observation}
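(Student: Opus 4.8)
The plan is to start from the two-term decomposition furnished by Theorem~\ref{thm:proofsep}, $\rho = (1-p)\ketbra{ab}{ab} + p\ketbra{cd}{cd}$ (for the moment with possibly complex product vectors), and to realify each single-qubit factor separately in the computational basis. The decisive structural fact I would exploit is that both measurement eigenbases are real: the eigenstates $\ket{0},\ket{1}$ of $\sigma_z$ and $\ket{+},\ket{-}$ of $\sigma_x$ have real components, so for any real joint eigenstate $\ket{ss'}$ one has $|\langle ss'|\psi\rangle|=|\langle ss'|\psi^*\rangle|$. Hence the four-outcome probability vectors of $\sigma_x\otimes\sigma_x$ and $\sigma_z\otimes\sigma_z$ — and therefore both entropies — are invariant under complex conjugation in that basis. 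Since conjugation acts locally and maps product states to product states, this is the reality freedom that makes the claim plausible.

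Concretely, I would write each qubit state as $\ket{a}=(a_0,a_1)^T$ and replace it by the real vector $\ket{\tilde a}=(|a_0|,\epsilon_a|a_1|)^T$ with a sign $\epsilon_a\in\{\pm1\}$ to be fixed, doing the same for $\ket{b},\ket{c},\ket{d}$, so as to obtain the real product mixture $\rho^* = (1-p)\ketbra{\tilde a\tilde b}{\tilde a\tilde b} + p\ketbra{\tilde c\tilde d}{\tilde c\tilde d}$. Because taking moduli leaves every single-qubit $\sigma_z$-marginal $(|a_0|^2,|a_1|^2)$ untouched, the entire $\sigma_z\otimes\sigma_z$ distribution of the mixture is literally unchanged, so $S_{zz}^{(q)}(\rho^*)=S_{zz}^{(q)}(\rho)$ for every $q$; combined with Theorem~\ref{thm:proofsep} this already yields $S_{zz}^{(q)}(\rho^*)\leqslant S_{zz}^{(q)}(\rho)$, with no sign choice needed.

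The work is in the $\sigma_x$ direction. For a single qubit, $|\langle\sigma_x\rangle|=|2\,\mathrm{Re}(a_0^*a_1)|\leqslant 2|a_0||a_1|=\sqrt{1-\langle\sigma_z\rangle^2}$, with equality exactly for real states; choosing $\epsilon_a=\mathrm{sign}\,\mathrm{Re}(a_0^*a_1)$ therefore makes each per-qubit $\sigma_x$-distribution at least as sharply peaked as before, while keeping it leaning toward the same outcome. For one product state this makes its $\sigma_x\otimes\sigma_x$ distribution majorize the original, and since the Tsallis- and R\'enyi entropies are Schur-concave (where defined), $S_{xx}^{(\tilde q)}$ cannot increase. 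The step I expect to be the main obstacle is promoting this to the two-term mixture: majorization is not preserved under convex combinations in general, so the componentwise bound does not immediately transfer. The point to nail down is that the choice $\epsilon=\mathrm{sign}\,\mathrm{Re}(\cdot)$ aligns each realified component with the corner its original already favours — the scenario in which mixing destroys majorization requires the realified components to spread apart while the originals stay concentrated together, which the alignment precludes. Making this rigorous, e.g.\ by verifying the partial-sum inequalities $\sum_{j\leqslant m}\tilde p_j^{\downarrow}\geqslant\sum_{j\leqslant m}p_j^{\downarrow}$ for the sorted $\sigma_x\otimes\sigma_x$ probabilities directly, or by invoking the conjugation invariance to symmetrize the mixture and reduce to an exactly data-preserving comparison, is the technical heart of the argument; the remainder is bookkeeping.
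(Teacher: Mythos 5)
Your skeleton matches the paper's: invoke Theorem~\ref{thm:proofsep} to get the two-term product decomposition, then realify one qubit factor at a time, observing that the $\sigma_z\otimes\sigma_z$ data (and hence $S_{zz}^{(q)}$) is untouched. But the step you yourself flag as ``the technical heart'' --- passing from the single-component majorization to the two-term mixture --- is a genuine gap, not bookkeeping. Majorization is not stable under convex combination, and your proposed repairs are heuristics rather than arguments: the alignment rule $\epsilon_a=\mathrm{sign}\,\mathrm{Re}(a_0^*a_1)$ commits you to a specific endpoint \emph{before} looking at the mixture, and when the two components $\ket{ab}$ and $\ket{cd}$ favour different $\sigma_x\otimes\sigma_x$ outcomes, sharpening each toward its own corner gives no control over the sorted partial sums of the mixed four-outcome distribution. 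Nothing in the proposal verifies those partial-sum inequalities, and the ``symmetrization by conjugation'' suggestion is not developed into a proof.

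The paper closes this gap by abandoning majorization altogether. Varying only the phase $\varphi_a$ of $\ket{a}$ (keeping $\ket{b}$ and $\ket{cd}$ fixed), the $\sigma_x\otimes\sigma_x$ distribution of the \emph{entire mixture} is affine in $\cos\varphi_a$, i.e.\ it traces a line segment $(1-\alpha)\vec{p}_1+\alpha\vec{p}_2$ in the probability simplex, with the endpoints $\alpha\in\{0,1\}$ corresponding to $\varphi_a\in\{0,\pi\}$, i.e.\ to a real $\ket{a}$. Since the entropy is a concave function of the probability vector, its minimum over the segment is attained at an endpoint --- whichever of the two it happens to be --- so one real choice of $\ket{a}$ does not increase $S_{xx}^{(\tilde q)}$ while leaving $S_{zz}^{(q)}$ fixed; repeating for $\ket{b},\ket{c},\ket{d}$ finishes the proof. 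This one-parameter-at-a-time concavity argument handles the mixture in a single stroke and is exactly the ingredient your write-up is missing; I recommend you replace the majorization route with it.
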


\begin{proof}
Using Theorem~\ref{thm:proofsep}, we immediately find a state $\sigma = (1-p) \ket{ab}\bra{ab} + p \ket{cd}\bra{cd}$ such that $S_{xx}^{(\tilde{q})}(\sigma) \leqslant S_{xx}^{(\tilde{q})}(\rho)$ and $S_{zz}^{(q)}(\sigma) \leqslant S_{zz}^{(q)}(\rho)$ for any $q, \tilde{q} \in \mathds{R}$. 
However, the states $\ket{ab}$ and $\ket{cd}$ might not be real. 

A general one-qubit state can be written as $\ket{a} = \cos\frac{\theta}{2} \ket{0} + e^{i\varphi} \sin\frac{\theta}{2} \ket{1}$ where $0 \leqslant \theta \leqslant \pi$ and $0 \leqslant \varphi < 2\pi$. 
The corresponding probabilities for $\sigma_x$ and $\sigma_z$ measurements are then given by
\begin{align}
p_0 &= \cos^2 \frac{\theta}{2} \text{, } p_1 = \sin^2 \frac{\theta}{2}, \\
p_\pm &= \frac12 \pm \frac12 \sin\theta \cos\varphi.
\end{align}
Hence, by just varying $\varphi$, $p_0$ and $p_1$ remain unaffected, while $p_+ = (1-\alpha) P_{\text{max}} + \alpha P_{\text{min}}$ and $p_- = \alpha P_{\text{max}} + (1-\alpha) P_{\text{min}}$, where $P_{\text{max}} = \frac12 + \frac12 \sin\theta$ and $P_{\text{max}} = \frac12 - \frac12 \sin\theta$, vary continuously with $0 \leqslant \alpha \leqslant 1$. 
Now, consider varying the state $\ket{a}$ in such a way while leaving $\ket{b}$ and $\ket{cd}$ the same.  
Obviously, the probability distribution for the $\sigma_z \otimes \sigma_z$ measurement on $\sigma$ stays unchanged. 
The $\sigma_x \otimes \sigma_x$ measurement, on the other hand, yields $(1-\alpha) \vec{p_1} + \alpha \vec{p_2}$ for some probability distributions $\vec{p_1}$ and $\vec{p_2}$. 
Hence, the optimization problem over $\alpha$ is an optimization over a convex set of probabilities. As the entropies are concave functions of probability distributions, the optimum can be found at the boundary. Note that we only optimize the $S_{xx}^{\tilde{q}}$ while leaving $S_{zz}^{q}$ unchanged. Thus, $\ket{a}$ can be chosen real and so can $\ket{b}$, $\ket{c}$ and $\ket{d}$.
\end{proof}

Reducing the optimization to real states of rank at most two, the lower number of parameters allows for robust numerical analysis. This suggests that for $q,\tilde{q} \geqslant 2$, the boundary is reached by real pure product states of the form
\begin{align}
\ket{\phi^{q,\tilde{q}}_\theta} = \Big( \cos\frac{\theta}{2} \ket{0} + \sin\frac{\theta}{2} \ket{1} \Big)^{\otimes 2}.
\end{align}
For $q = \tilde{q} = 2$, we obtain the (numerical) boundary for separable states $\rho$ as
\begin{align}
S_{zz}^{(2)}(\rho) \geqslant -\frac{9}{4} + 3 \sqrt{1-S_{xx}^{(2)}(\rho)} + S_{xx}^{(2)}(\rho),
\end{align}
which is shown in Fig.~\ref{fig:shannontsallis}. In the case of Shannon entropy, numerical analysis indicates that the boundary is realized by the states
\begin{align}
\ket{\phi^{S}_\theta} &= \ket{0} \otimes \Big( \cos\frac{\theta}{2} \ket{0} + \sin\frac{\theta}{2} \ket{1} \Big), \\
\ket{\psi^{S}_\theta} &= \Big( \cos\frac{\theta}{2} \ket{0} + \sin\frac{\theta}{2} \ket{1} \Big) \otimes \ket{+},
\end{align}
which is also shown in Fig.~\ref{fig:shannontsallis}.

\subsection{Robustness}

In the previous sections, we showed that the accessible regions in the entropy plot $(S^{(q)}_{xx},S^{(\tilde{q})}_{zz})$ are different for general two-qubit states and separable states when $q,\tilde{q} \geqslant 2$. Thus, these entropies provide a scrambling-invariant method to detect entanglement. The accessible regions for $q = \tilde{q} = 2$ are shown in Fig.~\ref{fig:shannontsallis}.

We investigate the robustness of this detection method for different $q=\tilde{q} \geqslant 2$. The robustness is quantified by the amount of white noise that can be added to the boundary states defined in Eq.~(\ref{psit}) such that they are still detectable. Numerical analysis indicates that independent of $q$, the most robust states are those with $S^{(q)}_{xx} = S^{(q)}_{zz}$, i.e. $t = 3$. For states $\rho_{\lambda,t} = (1-\lambda) \ket{\psi_t}\bra{\psi_t} + \lambda \frac{\mathds{1}}{4}$, it also holds that $S^{(q)}_{xx}(\rho_{\lambda,t}) = S^{(q)}_{zz}(\rho_{\lambda,t})$ independent of $\lambda$ and hence, they enter the region of separable states at the point of the symmetric real pure product state $\big[\frac{1}{\sqrt{1+s^2}} (s \ket{0} + \ket{1}) \big]^{\otimes 2}$ where $s = 1 + \sqrt{2}$. The maximal noise level $\lambda$ is then determined by
\begin{align}
\begin{split}
&\Big( \frac{(1-\lambda)t}{\sqrt{3+t^2}} + \frac{\lambda}{4} \Big)^{2q} + 3 \Big( \frac{(1-\lambda)}{\sqrt{3+t^2}} + \frac{\lambda}{4} \Big)^{2q}\\ &= \Big( \frac{s^2}{1+s^2} \Big)^{2q} + 2 \Big( \frac{s}{1+s^2} \Big)^{2q} + \Big( \frac{1}{1+s^2} \Big)^{2q}
\end{split}
\end{align}
which can be solved analytically for large $q$. In the limit of $q \rightarrow \infty$, $\lambda = \frac{1}{11} (10-\sqrt{2}-\sqrt{12}-\sqrt{24}) \approx 0.020$. 
Note that this is an upper bound on the robustness, since the boundary of the region of separable states was only determined numerically in the last section. However, even this upper bound is rather small and the method is not very robust. Finally, we see that the method is most robust for large $q$, but the limit is reached very fast.


\section{Scrambling-invariant families of entanglement witnesses}


A powerful method of detecting entanglement in the usual scenario are 
entanglement witnesses. An entanglement witness $W$ is a hermitian 
operator with non-negative expectation values for all separable states, 
and a negative expectation value for some entangled state 
$\rho$ \cite{guhne2009entanglement}. We say that $W$ detects $\rho$, 
as $\langle W \rangle_\rho < 0$ proves that $\rho$ is entangled. In 
this section, we show how witnesses can be used in the scrambled 
data scenario.

\subsection{Scrambling-invariant witnesses}

Inspired by the probability distributions of the states defined 
in Eq.~(\ref{psit}), we define a scrambling-invariant family of 
entanglement witnesses. In the most general form, with local 
measurements $\sigma_x \otimes \sigma_x$, $\sigma_y \otimes \sigma_y$, 
and $\sigma_z \otimes \sigma_z$, they are given by
\begin{align}\label{witness}
\begin{split}
W = \hspace*{3pt} 
&\mathds{1} + \alpha \ket{x_1x_2}\bra{x_1x_2}\\
&+ \beta \ket{y_1y_2}\bra{y_1y_2} + \gamma \ket{z_1z_2}\bra{z_1z_2},
\end{split}
\end{align}
where $\ket{x_j} \in \{\ket{+},\ket{-}\}$, $\ket{y_j} \in \{\ket{\text{y}^+},\ket{\text{y}^-}\}$, and $\ket{z_j} \in \{\ket{0},\ket{1}\}$ for $j=1,2$. 

The key observation is that if for fixed $\alpha$, $\beta$ and $\gamma$ this 
yields an entanglement witness, then also every other choice of $x_j$, $y_j$ 
and $z_j$ results in an entanglement witness. This is because using only local 
unitary transformations and the partial transposition, the witnesses can be 
transformed into each other. Consider for example 
$W = \mathds{1} + \alpha \ket{+-}\bra{+-} + \beta \ket{\text{y}^+\text{y}^+}\bra{\text{y}^+\text{y}^+} + \gamma \ket{10}\bra{10}$, 
and the transformations $U_A = \sigma_x$, and $U_B = \mathds{1}$. Then, 
$U_A^\dagger \otimes U_B^\dagger W^{T_A} U_A \otimes U_B 
= \mathds{1} + \alpha \ket{+-}\bra{+-} + \beta \ket{\text{y}^+\text{y}^+}\bra{\text{y}^+\text{y}^+} + \gamma \ket{00}\bra{00}$ 
and one can directly check that any other witness can also be reached.

Indeed, such mappings correspond to permutations of the probabilities, 
as
\begin{align}
\langle W \rangle = 1 + \alpha p_{x_1x_2} + \beta p_{y_1y_2} + \gamma p_{z_1z_2}.
\end{align}
So, for evaluating such a witness from scrambled data, one can just choose the
probabilities appropriately in order to minimize the mean value of the witness.

As a remark, for $\alpha, \beta, \gamma < 0$, the witnesses are additionally 
related to an entropic uncertainty relation for min-entropy 
$S^\infty(\vec{p}) = - \log \max_j p_j$ since the smallest expectation 
value of the corresponding family of entanglement witnesses can be 
written as
\begin{align}
\langle W \rangle = 1 + \alpha e^{-S^{(\infty)}_{xx}(\rho)} + \beta e^{-S^{(\infty)}_{yy}(\rho)} + \gamma e^{-S^{(\infty)}_{zz}(\rho)}.
\end{align}

\subsection{Optimized witnesses}

We want to find optimized $\alpha$, $\beta$ and $\gamma$ such 
that $W$ is an entanglement witness tangent to the space of 
separable states, i.e., there exists a separable state with 
$\langle W \rangle = 0$. In the following analysis, we restrict 
ourselves to only two measurements and witnesses of the form
\begin{align}
W = \mathds{1} + \alpha \ket{++}\bra{++} + \gamma \ket{00}\bra{00}.
\end{align}

First of all, we need to ensure that $\langle W \rangle \geqslant 0$ 
for all separable states. In order to obtain an optimal witness, 
we further need to adjust $\alpha$ and $\gamma$ such that for some 
separable state $\langle W \rangle = 0$.

The optimal values for $\alpha$ and $\gamma$ are found by optimizing
\begin{align}
\min_{\rho_s} \trace (\rho_s W)
\end{align}
for all $\alpha$ and $\gamma$. Because of linearity, we only need to 
consider general pure product states $\ket{\psi_A} \otimes \ket{\psi_B}$ 
where
\begin{align}
\ket{\psi_{A/B}} = \cos\frac{\theta_{A/B}}{2} \ket{0} + e^{i\phi_{A/B}} \sin\frac{\theta_{A/B}}{2} \ket{1}
\end{align}
with $0 \leqslant \theta \leqslant \pi$, $0 \leqslant \phi < 2\pi$. 
It turns out that for $ \frac{\gamma}{\alpha} \geqslant -3-2\sqrt{2}$, the optimal state is given by $\phi_A = \phi_B = 0$ and $\theta_A = \theta_B$, while $\phi_A = \phi_B = 0$ and $\theta_A - \frac{3\pi}{4} = \frac{3\pi}{4} - \theta_B$ needs to be considered in the case of $ \frac{\gamma}{\alpha} \leqslant -3-2\sqrt{2}$.

\begin{figure}[t]
\includegraphics[width=0.9\linewidth]{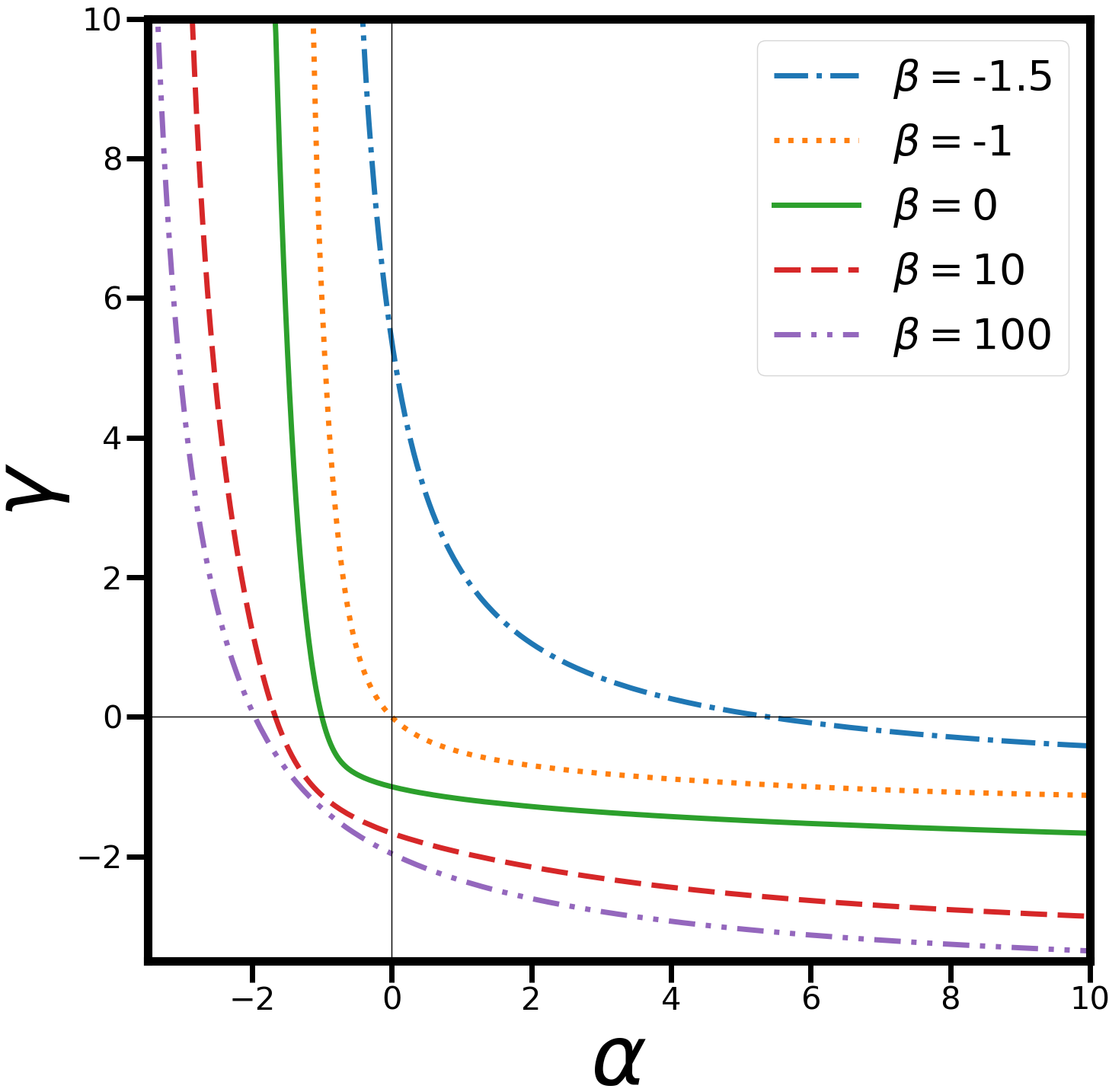}
\caption[justification=raggedright]{Optimized values for the parameters 
$\alpha$ and $\gamma$ for different $\beta$ in entanglement witnesses of 
the form $W = \mathds{1} + \alpha \ket{x_1x_2}\bra{x_1x_2} + \beta \ket{y_1y_2}\bra{y_1y_2} + \gamma \ket{z_1z_2}\bra{z_1z_2}$. Here, optimized means that for 
some separable state $\langle W \rangle = 0$.}\label{abc}
\end{figure}

Finally, we have to ensure that there exist entangled states with $\langle W \rangle < 0$. Since $\langle W \rangle = 1 + \alpha p_{++} + \gamma p_{00}$ and the probabilities are non-negative, either $\alpha$ or $\gamma$ must necessarily be negative. In that case, the eigenvector of $W$ corresponding to the smallest eigenvalue is indeed given by the entangled state
$\ket{\psi_t} = \frac{1}{\sqrt{3+t^2}} (t, 1, 1, 1)^T$ with 
$t=-{(\alpha - 2 \gamma + 2 \sqrt{\alpha^2 - \alpha \gamma + \gamma^2})}/{\alpha}$.

The resulting curve of optimal $\alpha$ and $\gamma$ in the case of $\beta = 0$ 
can thus be obtained analytically and is shown in Fig.~\ref{abc}. More generally, 
for witnesses of the form in Eq.~(\ref{witness}) where $\beta \neq 0$, we find 
the optimal parameters numerically.

\section{Non-convex structure of the non-detectable state space}

For many methods of entanglement detection, it is crucial that the set 
of separable states is convex. For instance, the existence of a witness
for any entangled state $\rho$ relies on this fact. This convexity is also
present in the case of restricted measurements, which are not tomographically
complete. If there is a way to detect the entanglement from a restricted set of
measurements, it can be done with an entanglement witness \cite{curty2005detecting}. 
In this section, we show that this is not the case when only scrambled data is 
available.

In order to test whether there would in principle be a method to detect the 
entanglement of a specific state using only scrambled data from local 
measurements $\sigma_x \otimes \sigma_x$ and $\sigma_z \otimes \sigma_z$, 
we use the fact that the PPT criterion is necessary and sufficient in the 
two-qubit case \cite{horodecki1996ppt}. Thus, we can formulate the problem 
as a family of semi-definite programs (SDPs) \cite{boyd}. We consider the 
problem
\begin{equation}
\begin{aligned}
\min_\rho \: &0 \\
\text{s.t.} &\trace \rho = 1, \\
&\rho \geqslant 0, \\
&\rho^{T_B} \geqslant 0, \\
&\rho \text{ realizes one of the }(4!)^2 \text{ permutations} \\
&\text{\,\,\,\,\, of the given probability distribution} \\
&\text{\,\,\,\,\, for measurements } \sigma_x \otimes \sigma_x \text{ and } \sigma_z \otimes \sigma_z.
\end{aligned}
\end{equation}
This is a so called feasibility problem: If a $\rho$ with the desired properties exist, 
the output of the SDP is zero, and $\infty$ otherwise. If this family of SDPs fails for 
all permutations, then there is no separable state that realizes the same scrambled data 
as the original state. Hence, the entanglement of such a state is detected. Otherwise, 
we call the state {possibly separable}. 

In practice, without scrambled data, around $1.2\%$ of all random states according 
to the Hilbert-Schmidt measure can be shown to be entangled using only local 
measurements $\sigma_x \otimes \sigma_x$ and $\sigma_z \otimes \sigma_z$. In 
the case of scrambled data, we tested approximately $130,000,000$ random 
mixed states and found around $3000$ detectable states using the corresponding 
scrambled data. Note that for the implementation it is possible to reduce the 
number of permutations that need to be considered to just $18$, as local relabeling 
of the outcomes or the exchange of qubits can be neglected.

Out of these states, only six can be detected using the scrambling-invariant 
entanglement witnesses and \textit{none} using the entropic uncertainty 
relations where $q = \tilde{q}$. The reason for this poor performance 
is the non-convex structure of the set of non-detectable states, as we discuss
now.

\begin{figure}[t]
\includegraphics[width=0.9\linewidth]{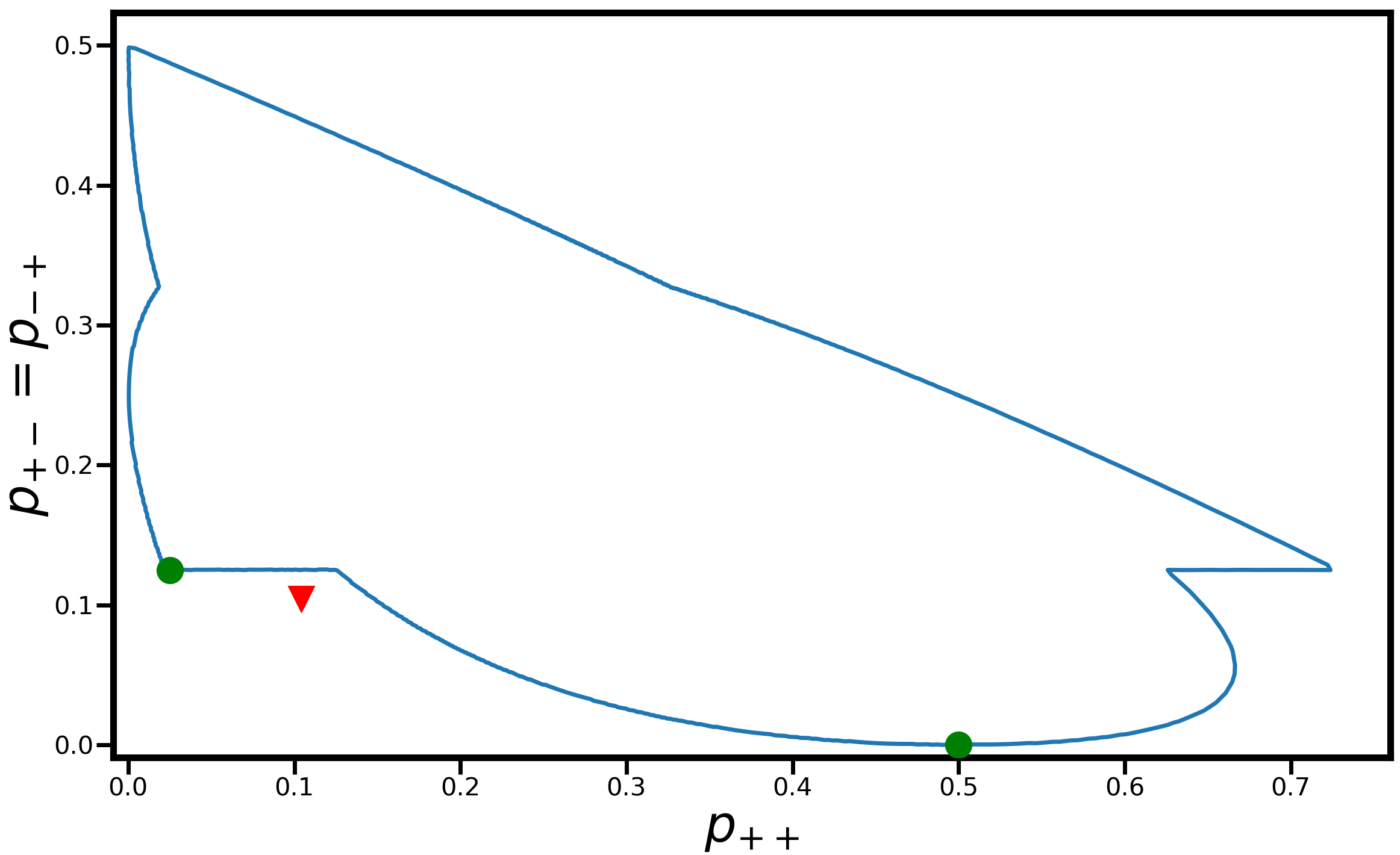}
\caption[justification=raggedright]{Projection of the set of 
possibly separable states (blue line) for local measurements 
$\sigma_x \otimes \sigma_x$ and $\sigma_z \otimes \sigma_z$, 
where $p_{++} = p_{00}$, $p_{+-} = p_{-+} = p_{01} = p_{10}$, 
and $p_{--} = p_{11}$, onto the coordinates $(p_{++},p_{+-})$. 
Clearly, this set is non-convex. The green dots and the red 
triangle correspond to an explicit counterexample to the
convexity, as explained
in the main text.}\label{nonconvex}
\end{figure}

First, we note that the set of possibly separable states is 
star-convex around the maximally mixed state $\frac{\mathds{1}}{4}$.
This can be seen as follows: If a state $\rho$ is part of the set of 
possibly separable states, there is a separable state $\sigma$ that 
realizes the same probability distribution as $\rho$ up to a permutation. 
Then, $\lambda \sigma + (1-\lambda) \frac{\mathds{1}}{4}$ is still 
separable for $0 \leqslant \lambda \leqslant 1$ and realizes the same 
probability distribution as $\lambda \rho + (1-\lambda) \frac{\mathds{1}}{4}$ 
up to the same permutation as before. 

This fact can be used to characterize 
the boundary of the possibly-separable state space by starting with the 
maximally mixed state and mixing it with detectable states until the 
mixture becomes detectable. To illustrate the non-convexity of this set, 
we assume first that it is convex. Then, the intersection with any convex 
set, for example the set of states with probabilities 
$p_{++} = p_{00}$, $p_{+-} = p_{-+} = p_{01} = p_{10}$, 
and $p_{--} = p_{11}$, would again form a convex set. Furthermore, 
the projection onto the coordinates $(p_{++},p_{+-})$ would be convex. This 
projection is shown in Fig.~\ref{nonconvex}. Clearly, it is non-convex, 
and hence, the initial assumption is incorrect. To make this statement 
independent of numerical analysis, we provide an explicit counterexample.

\begin{observation}
The set of possibly separable states for local measurements $\sigma_x \otimes \sigma_x$ and $\sigma_z \otimes \sigma_z$ is non-convex.
\end{observation}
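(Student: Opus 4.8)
The plan is to establish non-convexity directly by exhibiting an explicit counterexample: two possibly separable states whose equal mixture is detectable. To keep both the dimensionality and the number of permutations manageable, I would work inside the symmetric slice already used in Fig.~\ref{nonconvex}, i.e. states whose measurement data satisfies $p_{++}=p_{00}$, $p_{+-}=p_{-+}=p_{01}=p_{10}$, and $p_{--}=p_{11}$. On this slice the scrambled data of a state is captured by the two numbers $(p_{++},p_{+-})$, and the family of feasibility SDPs introduced above reduces to the $18$ inequivalent permutations.

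First I would select two points on the boundary of the possibly separable region (the two green dots in Fig.~\ref{nonconvex}) and write down explicit two-qubit states $\rho_1$ and $\rho_2$ realizing them. For each I would certify possible separability constructively, by exhibiting a separable state $\sigma_i$ — for instance a mixture of a few pure product states, or a Bell-diagonal PPT state — whose probability multisets for $\sigma_x\otimes\sigma_x$ and $\sigma_z\otimes\sigma_z$ coincide with those of $\rho_i$ up to a permutation. Verifying this amounts to matching two unordered four-tuples and checking $\sigma_i\geqslant 0$ together with $\sigma_i^{T_B}\geqslant 0$, which is a finite, fully explicit computation.

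Since the outcome probabilities are linear functionals of the state, the midpoint $\rho=\tfrac12(\rho_1+\rho_2)$ has probability vectors equal to the componentwise averages of those of $\rho_1$ and $\rho_2$; this is the red triangle in Fig.~\ref{nonconvex}, which is chosen to lie in the concave dip of the region. The crux is then to show that $\rho$ is not possibly separable, i.e. that for every one of the $18$ permutations $\pi$ there is no separable state reproducing the permuted data. The cleanest route, which also removes any reliance on numerics, is to supply for each $\pi$ a dual certificate: a Hermitian operator that is non-negative on all separable states — a scrambling-invariant witness of the form in Eq.~\eqref{witness} is the natural candidate — but whose expectation value is forced to be negative by the affine constraints fixing the permuted probabilities. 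Equivalently, one shows that the section $\{\rho'\geqslant 0,\ \rho'^{T_B}\geqslant 0,\ \rho'\text{ realizes }\pi\}$ is empty by producing an infeasibility certificate for the corresponding SDP. Constructing these certificates uniformly across all $18$ permutations, rather than checking each case separately, is the main obstacle; I expect the symmetry of the slice to collapse the $18$ cases into just a few, with the optimized witnesses of the previous section already covering most of them and leaving only a small number of explicit PPT-infeasibility checks to complete the argument.
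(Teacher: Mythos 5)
Your proposal follows essentially the same route as the paper: an explicit pair of possibly separable states on the symmetric slice of Fig.~\ref{nonconvex} whose convex combination is detectable, with detectability certified for every one of the $18$ inequivalent permutations by scrambling-invariant witnesses (the paper uses $W = \mathds{1} \pm \sigma_x \otimes \sigma_x \pm \sigma_z \otimes \sigma_z$ and the mixture $\tfrac{5}{6}\rho_1+\tfrac{1}{6}\rho_2$ rather than the midpoint, but the logic is identical). The plan is sound; it only remains to carry out the finite checks you describe.
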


The states $\rho_1 = \frac{1}{4} \left( \mathds{1} \otimes \mathds{1} - \frac{7}{10} (\mathds{1} \otimes \sigma_x + \sigma_x \otimes \mathds{1} \right.$ $\left. + \mathds{1} \otimes \sigma_z + \sigma_z \otimes \mathds{1}) + \frac{1}{2} (\sigma_x \otimes \sigma_x + \sigma_z \otimes \sigma_z + \sigma_x \otimes \sigma_z \right.$ $\left. + \sigma_z \otimes \sigma_x) \right)$ and $\rho_2 = \ket{\Phi^+}\bra{\Phi^+}$ where $\ket{\Phi^+} = \frac{1}{\sqrt{2}} (\ket{00} + \ket{11})$ realize probability distributions corresponding to the left and right green dot in Fig.~\ref{nonconvex}, respectively. While $\rho_1$ is separable, the product state $\ket{+}\ket{0}$ realizes the same scrambled data as $\rho_2$ and hence, $\rho_2$ is possibly separable. Thus, they are part of the possibly separable state space. However, the mixture $\rho = \frac{5}{6} \rho_1 + \frac{1}{6} \rho_2$, shown as a red triangle in Fig.~\ref{nonconvex}, is detectable. The scrambled data of the corresponding probability distribution $p_{++} = p_{+-}
 = p_{-+} = p_{00} = p_{01} = p_{10} = \frac{5}{48}$ and $p_{--} = p_{11} = \frac{33}{48}$ cannot origin from a separable state. The witnesses $W = \mathds{1} \pm \sigma_x \otimes \sigma_x \pm \sigma_z \otimes \sigma_z$ certify the entanglement for all permutations.\\

For general two-qubit states, the same procedure can be applied. Indeed, 
mixing the $3000$ random detectable states with white noise such that 
they are barely compatible with scrambled data from separable states, 
can be used to characterize the boundary of the set of possibly separable 
states. Mixing pairs of these states with equal weights leads in some 
cases to detectable states, also witnessing the non-convex structure.


\section{Conclusion}


We have introduced the concept of scrambled data, meaning that the assignment 
of probabilities to outcomes of the measurements is lost. Clearly, this restriction 
limits the possibilities of entanglement detection. Nevertheless, we have shown
that using entropies and entanglement witnesses one can still detect the entanglement
in some cases. These methods are limited, however, as the set of states whose 
scrambled data can be realized by separable states is generally not convex.

There are several directions in which our work may be extended or generalized. 
First, one may consider more general scenarios than the two-qubit situation 
considered here, such as the case of three or more particles. Second, it would
be interesting to study our results on entropies further, in order to derive
systematically entropic uncertainty relations for various entropies. Such entropic
uncertainty relations find natural applications in the security analysis of quantum
key distribution and quantum information theory. Finally, it would be intriguing
to connect our scenario to Bell inequalities.  This could help to relax assumptions 
on the data for non-locality detection and device-independent quantum information
processing.

\section{Acknowledgments}

We thank Xiao-Dong Yu, Ana Cristina Sprotte Costa and Roope Uola for fruitful discussions.
This work was supported by the DFG, the ERC (Consolidator 
Grant No. 683107/TempoQ), the Asian Office of 
Aerospace (R\&D grant FA2386-18-1-4033) and the House of Young Talents Siegen.

\end{document}